\newtheorem{theorem}{Theorem} 
\newtheorem{thm}[theorem]{Theorem}
\newtheorem{defn}[theorem]{Definition}
\newtheorem{prop}[theorem]{Proposition}
\newcommand{\ketbra}[2]{\ket{#1}\!\bra{#2}}
\newcommand{\vect}[1]{\mathbf{#1}}
\newcommand{\ptope}{\mathscr{P}}
\newcommand{\subn}[1]{\breve{\vect{#1}}}
\newcommand{\overlap}{g} 
\newcommand{\markup}[1]{#1}
\begin{document}

\title{Fidelity Bounds for Device-Independent Advantage Distillation}

\author{Thomas A. Hahn}
\email{thomas.hahn@weizmann.ac.il}
\affiliation{Institute for Theoretical Physics, ETH Z{\"u}rich, Switzerland}
\author{Ernest Y.-Z. Tan}
\email{yzetan@uwaterloo.ca}
\affiliation{Institute for Theoretical Physics, ETH Z{\"u}rich, Switzerland}

\begin{abstract}
It is known that advantage distillation (that is, information reconciliation using two-way communication) improves noise tolerances for quantum key distribution (QKD) setups. Two-way communication is hence also of interest in the device-independent case, where noise tolerance bounds for one-way error correction are currently too low to be experimentally feasible. Existing security proofs for the device-independent repetition-code protocol (the most prominent form of advantage distillation) rely on fidelity-related security conditions, but previous bounds on the fidelity were not tight. We improve on those results by developing an algorithm that returns arbitrarily tight lower bounds on the fidelity. Our results give insight on how strong the fidelity-related security conditions are, and could also be used to compute some lower bounds on one-way protocol keyrates. Finally, we conjecture a necessary security condition for the protocol studied in this work, that naturally complements the existing sufficient conditions. 
\end{abstract}

\maketitle

\section{Introduction} \label{sec:protocol}
The ultimate goal of key distribution protocols is to generate secure keys between two parties, Alice and Bob. To this end, device-independent quantum key distribution (DIQKD) schemes aim to provide information-theoretically secure keys by taking advantage of non-local correlations, which can be verified via Bell inequalities \cite{pironio2009deviceindependent, scarani2013deviceindependent,Arnon-Friedman:2018aa,bennett2020quantum}. Critically, Bell violations rely only on the measurement statistics, $\operatorname{Pr}\left(ab|xy\right)$, where $a \ (b)$ is Alice's (Bob's) measurement outcome and $x \ (y) $ is Alice's (Bob's) measurement setting \cite{brunner2013bell}. By basing security on Bell inequalities, DIQKD protocols do not require any knowledge of the bipartite state that Alice and Bob share, nor of the measurements both parties conduct, apart from the assumption that they act on separate Hilbert spaces \cite{brunner2013bell,pironio2009deviceindependent}. (To guarantee security, it is still important to ensure that information about the device outputs themselves is not simply leaked to the adversary. Also, if the devices are reused, they must not access any registers retaining memory of `private data', in order to avoid the memory attack of~\cite{PhysRevLett.110.010503}.) We consider all Hilbert spaces to be finite-dimensional.

Although DIQKD allows for the creation of secret keys under very weak assumptions, there is a trade-off when it comes to noise tolerances \cite{1176619,pironio2009deviceindependent}. Standard DIQKD protocols which apply one-way error-correction steps have fairly low noise robustness, and are therefore not currently experimentally feasible \cite{pironio2009deviceindependent, Arnon-Friedman:2018aa}. To improve noise tolerances, one may implement techniques such as noisy pre-preprocessing \cite{PhysRevLett.124.230502}, basing the protocol on asymmetric CHSH inequalities \cite{woodhead2020deviceindependent,sekatski2020deviceindependent}, or applying advantage distillation \cite{tan2019advantage}. In this work we focus solely on advantage distillation, which refers to using two-way communication for information reconciliation, in place of one-way error-correction. 

In device-dependent QKD \cite{1176619,PhysRevA.66.060302,renner2005security,bae2006key,PhysRevA.73.012327, Watanabe_2007,Khatri2017NumericalEF}, as well as classical key distillation scenarios \cite{256484,Wolf99}, advantage distillation can perform better than standard one-way protocols 
in terms of noise tolerance. As for DIQKD, it has been shown that advantage distillation leads to an improvement of noise tolerance as well \cite{tan2019advantage}, but the results obtained in that work may not be optimal. Specifically, a sufficient condition was derived there for the security of advantage distillation against collective attacks, 
based on the fidelity between an appropriate pair of conditional states (see Theorem~\ref{Theoremsufffid1} below). However, the approach used in~\cite{tan2019advantage} to bound this fidelity is suboptimal, and hence the results were not tight.

Our main contribution in this work is to derive an algorithm based on semidefinite programs (SDPs) that yields arbitrarily tight lower bounds on the relevant fidelity quantity considered in~\cite{tan2019advantage}. We apply this algorithm to several DIQKD scenarios studied in that work, and compare the resulting bounds. 
Surprisingly, while we find improved noise tolerance for some scenarios, we do not have such improvements for the scenario that gave the best noise tolerances in~\cite{tan2019advantage}, which relied on a more specialized security argument.
An important consequence of this finding is that it serves as strong evidence that the general sufficient condition described in~\cite{tan2019advantage} is in fact not necessary in the DIQKD setting, in stark contrast to the device-dependent QKD protocols in~\cite{bae2006key,PhysRevA.73.012327}, where it is both necessary and sufficient (when focusing on the repetition-code advantage distillation protocol; see below). In light of this fact, we describe an analogous condition that we conjecture to be necessary, and discuss possible directions for further progress.

We consider the following set-up for two parties, Alice and Bob \cite{scarani2013deviceindependent,tan2019advantage}.
\begin{itemize}
	\item Measurement Settings: Alice (Bob) has $M_A$ ($M_B$) possible measurement inputs and chooses $x\in \mathcal{X} = \{ 0,..., M_A-1 \}$ $\left(y \in \mathcal{Y} = \{ 0,..., M_B-1 \} \right)$. The measurements are denoted as $A_{x}$ ($B_{y}$) for Alice (Bob). 
	\item Measurement Outcomes: Alice (Bob) has $2$ possible measurement outcomes and measures ${a \in \mathcal{A} = \{ 0,1 \}}$ $\left(b \in \mathcal{B} = \{ 0,1 \}\right)$.
\end{itemize}
The only key-generating measurements are $A_0$ and $B_0$. We consider Eve to be restricted to collective attacks \cite{Scarani_2009}, where it is assumed that the measurements Alice and Bob may conduct, as well as the single-round tripartite state, $\rho_{ABE}$, that Alice, Bob, and an adversary, Eve, share are independent and identical for each round. 
Since we are working in the device-independent setting, we consider Alice and Bob's measurements to be otherwise uncharacterized.
For ease of applying the results from~\cite{tan2019advantage}, we assume that they perform a symmetrization step, in which Alice and Bob publicly communicate a uniformly random bit and XOR it with their raw outputs (see~\cite{tan2019advantage} for details on when this step can be omitted). We use $\epsilon$ to denote the quantum bit error rate (QBER), i.e.~the probability of obtaining different outcomes from measurements $A_0, B_0$.

We consider the repetition-code protocol for advantage distillation \cite{256484,Wolf99,renner2005security,bae2006key,PhysRevA.73.012327}, which proceeds as follows. After gathering the raw output strings from their devices, Alice and Bob split the key-generating rounds into blocks of size $n$ each, and from each block they will attempt to generate a single highly-correlated bit. For each block, Alice generates a secret, uniformly random bit, $C$, and adds it to her $n$ bits, $\textbf{A}_0$. She then sends this `encoded' bitstring, $\textbf{M} = \textbf{A}_0 \oplus (C,...,C)$, as a message to Bob via an authenticated channel. He then tries to decode the message by adding his own bitstring, $\textbf{B}_0$, to it. He accepts the block of $n$ bits if and only if $\textbf{M} \oplus \textbf{B}_0 = (C',...,C')$. If accepted, he indicates this to Alice by sending her a bit $D=1$ via an authenticated channel. Otherwise, he sends $D=0$. Considering only the accepted blocks, this process has therefore reduced each block of $n$ bitpairs to a single highly correlated bitpair, $(C,C')$. 
Alice and Bob then apply a one-way error correction procedure (from Alice to Bob) on the resulting bitpairs over asymptotically many rounds, followed by privacy amplification to produce a final secret key. This procedure can achieve a positive asymptotic keyrate if the bitpairs in the accepted blocks satisfy some conditions we shall now describe. 

The protocol can be used to distill a secret key if \cite{Devetak_2005} 
\begin{equation} \label{eq:entropycond1}
r \coloneqq \operatorname{H}(C|\mathbf{E} \mathbf{M}; D=1) - \operatorname{H}(C|C'; D=1) > 0 \ ,
\end{equation}
where $\mathbf{E}$ is Eve's side-information across one block of $n$ rounds and $\operatorname{H}$ is the von Neumann entropy  \cite{Arnon-Friedman:2018aa,Devetak_2005}. The second entropy term, $\operatorname{H}(C|C'; D=1)$, can easily be determined via the QBER \cite{tan2019advantage,pironio2009deviceindependent}. In \cite{tan2019advantage}, Eve's conditional entropy $\operatorname{H}(C|\mathbf{E} \mathbf{M}; D=1)$ was lower-bounded using inequalities that are not necessarily tight, leading to the following result based on the (root-)fidelity $\operatorname{F}(\rho, \sigma) \coloneqq \lVert \sqrt{\rho} \sqrt{\sigma}\rVert_1$:
\begin{thm} \label{Theoremsufffid1}
A secret key can be generated if 
\begin{equation} \label{eq:sufffidcond1}
\operatorname{F}(\rho_{E|00}, \rho_{E|11})^2 > \frac{\epsilon}{(1-\epsilon)} \ ,
\end{equation}
where $\rho_{E|a_0 b_0}$ denotes Eve's conditional state (in a single round) after Alice and Bob use inputs $A_0$ and $B_0$ and obtain outcomes $a_0$ and $b_0$.
\end{thm}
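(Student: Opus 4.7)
My plan is to establish positivity of the Devetak--Winter quantity $r$ in~\eqref{eq:entropycond1} for some sufficiently large block size $n$, under the hypothesis $\operatorname{F}(\rho_{E|00}, \rho_{E|11})^2 > \epsilon/(1-\epsilon)$. The two entropic contributions are handled separately and then compared in their asymptotic $n$-dependence.

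For the error-correction term, I would first compute the joint distribution of $(C,C')$ conditioned on acceptance. A direct count on the i.i.d.\ model gives the post-selected error rate $\tilde{\epsilon}_n = \epsilon^n/(\epsilon^n + (1-\epsilon)^n)$, so $\operatorname{H}(C|C'; D=1) = h(\tilde{\epsilon}_n)$ with $h$ the binary entropy. For $\epsilon < 1/2$ this decays as $n\bigl(\epsilon/(1-\epsilon)\bigr)^n$ up to a logarithmic factor.

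Next I would turn to the quantum term $\operatorname{H}(C|\mathbf{E}\mathbf{M}; D=1)$. The symmetrization step ensures $C$ is uniform given $(\mathbf{M}, D=1)$, and that Eve's conditional states for the two values of $C$ are related by a single bit-flip symmetry. Since acceptance forces $\mathbf{A}_0 = \mathbf{B}_0$ in every round, Eve's $n$-round conditional state factorizes as a tensor product of single-round states of the form $\rho_{E|00}$ or $\rho_{E|11}$, with the positions of each type dictated by $\mathbf{M}$ and interchanged when $C$ is flipped. By multiplicativity of the root-fidelity on tensor products, the fidelity between Eve's two $n$-round conditional states is exactly $\operatorname{F}(\rho_{E|00}, \rho_{E|11})^n$, independently of the specific message $\mathbf{M}$.

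Finally, I would convert this fidelity into a lower bound on $\operatorname{H}(C|\mathbf{E}\mathbf{M}; D=1)$ using a standard inequality for binary uniform cq-states, namely $\operatorname{H}(C|E) \geq 1 - h\!\left(\tfrac{1+\operatorname{F}(\rho_0,\rho_1)}{2}\right)$, which follows from Uhlmann purification together with concavity of the von Neumann entropy. Inserting the product fidelity yields a lower bound scaling like $\operatorname{F}(\rho_{E|00},\rho_{E|11})^{2n}$ (up to a constant) for large $n$. Comparing with the decay of $h(\tilde{\epsilon}_n)$ then makes $r$ positive exactly when $\operatorname{F}(\rho_{E|00},\rho_{E|11})^{2n}$ eventually dominates $(\epsilon/(1-\epsilon))^n$, i.e.\ precisely under~\eqref{eq:sufffidcond1}. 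I expect the main obstacle to be the careful justification that the symmetrization really collapses every single-round factor in Eve's state to the same pair $(\rho_{E|00}, \rho_{E|11})$ regardless of the bits of $\mathbf{M}$, so that multiplicativity of fidelity can be applied cleanly; once this reduction is in place, the remainder is a comparison of two exponential rates.
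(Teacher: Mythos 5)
Your overall strategy is the one the paper attributes to the original proof of this theorem in~\cite{tan2019advantage}: bound $\operatorname{H}(C|C';D=1)$ via the post-selected error rate $\delta_n = \epsilon^n/(\epsilon^n+(1-\epsilon)^n)$, lower-bound $\operatorname{H}(C|\mathbf{E}\mathbf{M};D=1)$ through the fidelity of Eve's two conditional states using the inequality of~\cite{PhysRevLett.105.040505} (your $1-h\bigl(\tfrac{1+\operatorname{F}}{2}\bigr)$ is the same as the paper's~\eqref{eq:entfidbnd} by symmetry of $h$ about $1/2$), and compare the exponential rates $\operatorname{F}^{2n}$ versus $(\epsilon/(1-\epsilon))^n$. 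That comparison does yield exactly condition~\eqref{eq:sufffidcond1}, so the skeleton is right.

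However, there is a genuine error in the reduction you flag as the ``main obstacle,'' and it is not the one you diagnose. Acceptance does \emph{not} force $\mathbf{A}_0=\mathbf{B}_0$ in every round: Bob accepts whenever $\mathbf{M}\oplus\mathbf{B}_0$ is \emph{any} constant string, so $D=1$ only forces $\mathbf{A}_0\oplus\mathbf{B}_0 = (C\oplus C',\dots,C\oplus C')$, which is the all-ones string on the error branch $C\neq C'$ (probability $\delta_n$ after post-selection). Consequently Eve's state conditioned on $C=c$, $\mathbf{M}=\mathbf{m}$, $D=1$ is not a tensor power but the mixture appearing in the paper's~\eqref{condstate1}--\eqref{condstate2}, e.g.
\begin{equation}
\omega_0 = (1-\delta_n)\,\rho_{\mathbf{E}|\mathbf{m}\mathbf{m}} + \delta_n\,\rho_{\mathbf{E}|\mathbf{m}\overline{\mathbf{m}}}\ ,
\end{equation}
so $\operatorname{F}(\omega_0,\omega_1)$ is not exactly $\operatorname{F}(\rho_{E|00},\rho_{E|11})^n$. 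The fix is standard: strong concavity of the root-fidelity gives $\operatorname{F}(\omega_0,\omega_1)\geq (1-\delta_n)\operatorname{F}(\rho_{\mathbf{E}|\mathbf{m}\mathbf{m}},\rho_{\mathbf{E}|\overline{\mathbf{m}}\overline{\mathbf{m}}}) = (1-\delta_n)\operatorname{F}(\rho_{E|00},\rho_{E|11})^n$ (multiplicativity plus symmetry handles the position-dependence in $\mathbf{m}$, as you correctly note), and since $\delta_n\to 0$ the prefactor does not change the rate comparison. You should also make the asymptotic step explicit --- $1-h\bigl(\tfrac{1-x}{2}\bigr)=\Theta(x^2)$ as $x\to 0$ versus $h(\delta_n)=\Theta(n\delta_n)$ --- but that is routine. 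With the mixture handled correctly, your argument goes through and matches the cited proof; the paper's Sec.~\ref{sec:necproof} sketches an alternative route via the min-entropy and the Fuchs--van de Graaf inequality~\eqref{eq:FvdGupp}, which arrives at the same condition.
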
 
\noindent Our goal will be to find a general method to certify the condition in Theorem~\ref{Theoremsufffid1}.
For later use, we also note that in the case where both parties have binary inputs and outputs, an alternative condition was derived~\cite{tan2019advantage} based on the trace distance $d(\rho, \sigma) \coloneqq (1/2) \lVert \rho-\sigma \rVert_1 $:
\begin{thm} \label{Theoremsuffdist}
If $\mathcal{X} = \mathcal{Y}=\{ 0,1 \}$ and all measurements have binary outcomes, a secret key can be generated if 
\begin{equation} \label{eq:sufftracecond1}
1-d( \rho_{E|00}, \rho_{E|11})> \frac{\epsilon}{(1-\epsilon)} \ .
\end{equation}
\end{thm}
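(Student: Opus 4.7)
The plan is to deduce Theorem~\ref{Theoremsuffdist} from Theorem~\ref{Theoremsufffid1} by exploiting the qubit structure that Jordan's lemma provides in the binary-input, binary-output setting. The core observation is that, after reducing Eve's conditional states to the qubit case, the Fuchs--van de Graaf inequality saturates on pure states, giving $\operatorname{F}^2 = 1 - d^2 \geq 1 - d$, which is exactly what upgrades the trace-distance condition to the fidelity condition of Theorem~\ref{Theoremsufffid1}.

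First I would apply Jordan's lemma: since each party has only two binary-outcome measurements, $\mathcal{H}_A$ and $\mathcal{H}_B$ decompose into direct sums of subspaces of dimension at most $2$ that are jointly invariant under the corresponding pair of measurements. Publicly announcing the Jordan block labels (equivalently, handing them to Eve) can only help the adversary, so I may assume the single-round state lives in a qubit-qubit subspace on which the measurements are projective. Next, I would take Eve to hold a purification of Alice and Bob's state on each block; since a projective measurement on one half of a bipartite pure state yields a pure conditional state on the complementary system, Eve's states $\rho_{E|ab}$ within each Jordan block are pure.

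On pure qubit states Fuchs--van de Graaf saturates, so per block $\operatorname{F}(\rho_{E|00},\rho_{E|11})^2 = 1 - d(\rho_{E|00},\rho_{E|11})^2 \geq 1 - d(\rho_{E|00},\rho_{E|11})$, using $d \in [0,1]$. The hypothesis $1 - d > \epsilon/(1-\epsilon)$ would then give $\operatorname{F}^2 > \epsilon/(1-\epsilon)$, and Theorem~\ref{Theoremsufffid1} would yield the conclusion.

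The main obstacle I anticipate is the lifting of this per-block inequality to the global quantities entering Theorem~\ref{Theoremsufffid1}: both $\operatorname{F}$ and $d$ decompose as $p_k$-weighted sums over Jordan blocks, but squaring does not commute with averaging, so the block-level bound $\operatorname{F}_k^2 \geq 1 - d_k$ does not immediately give $\operatorname{F}^2 \geq 1 - d$ globally. A cleaner fix would be to keep the Jordan block label as an explicit classical register held by Eve from the outset, and then verify the required inequality $\operatorname{F}^2 \geq 1 - d$ for block-diagonal mixtures of pure qubit states via a convexity/extreme-point argument; this would reduce the remaining task to a concrete scalar inequality in the block probabilities and overlaps, which I expect to be checkable directly.
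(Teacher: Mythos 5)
This theorem is not proved in the paper at all --- it is imported from~\cite{tan2019advantage}, where it is established by a direct analysis of Eve's guessing probability for $C$ via the trace distance $d(\omega_0,\omega_1)$ (essentially the min-entropy route sketched in Sec.~\ref{sec:necproof}), not by reduction to Theorem~\ref{Theoremsufffid1}. So your approach is genuinely different from the source --- and unfortunately it contains a fatal gap, precisely at the point you flagged as the ``main obstacle.''

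The inequality $\operatorname{F}(\rho_{E|00},\rho_{E|11})^2 \geq 1-d(\rho_{E|00},\rho_{E|11})$ is simply false for block-diagonal mixtures of pure states, so no convexity or extreme-point argument can rescue the lifting step. Concretely, take two Jordan blocks with equal weight $1/2$ conditioned on either outcome, and let
\begin{equation*}
\rho_{E|00} = \tfrac{1}{2}\ketbra{\psi_1}{\psi_1}\otimes\ketbra{1}{1} + \tfrac{1}{2}\ketbra{\psi_2}{\psi_2}\otimes\ketbra{2}{2}, \qquad
\rho_{E|11} = \tfrac{1}{2}\ketbra{\phi_1}{\phi_1}\otimes\ketbra{1}{1} + \tfrac{1}{2}\ketbra{\phi_2}{\phi_2}\otimes\ketbra{2}{2},
\end{equation*}
with $\ket{\psi_1}=\ket{\phi_1}$ and $\braket{\psi_2|\phi_2}=0$. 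Then $\operatorname{F}=\tfrac{1}{2}$, so $\operatorname{F}^2=\tfrac{1}{4}$, while $d=\tfrac{1}{2}$ and $1-d=\tfrac{1}{2}>\operatorname{F}^2$. The saturation $\operatorname{F}^2=1-d^2$ holds per block but is destroyed by averaging: the fidelity averages the overlaps linearly before squaring. Moreover, the paper's own numerics already rule out any reduction of Theorem~\ref{Theoremsuffdist} to Theorem~\ref{Theoremsufffid1}: in scenario (c), the (essentially tight) fidelity bounds computed here give a threshold of $q\approx 6.7\%$ via Theorem~\ref{Theoremsufffid1}, whereas Theorem~\ref{Theoremsuffdist} certifies $q\approx 7.7\%$. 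If your implication held, the second number could not exceed the first. This is exactly the phenomenon the paper highlights as evidence that condition~\eqref{eq:sufffidcond1} is not necessary, and that the optimal $\rho_{E|00},\rho_{E|11}$ are not pure (as your own Jordan-block decomposition makes explicit --- they are mixtures over blocks). To prove Theorem~\ref{Theoremsuffdist} you need to work with the trace distance directly, as in~\cite{tan2019advantage}, rather than routing through the fidelity.
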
 

\section{Results}
To find optimal bounds for Theorem \ref{Theoremsufffid1}, we need to minimize the fidelity for a given observed distribution. We show that this can be written as an SDP in Sec.~\ref{sec:SDP}, and use this to calculate noise tolerances for a range of repetition-code advantage distillation setups in Sec.~\ref{sec:thresholds}. We conclude the Results section with a conjecture for a necessary condition that naturally complements the sufficient condition in \cite{tan2019advantage}.

\subsection{SDP Formulation of Minimum Fidelity Condition}\label{sec:SDP}
To see if Eve can minimize the fidelity such that \eqref{eq:sufffidcond1} does not hold, we must first solve the following constrained optimization over all possible $\rho_{ABE}$ and possible measurements by Alice and Bob:
\begin{equation} \label{eq:exactfidopt1}
\begin{aligned}
\inf \quad & \operatorname{F}(\rho_{E|00}, \rho_{E|11}) \\
\textrm{s.t.} \quad &\operatorname{Pr}(ab|xy)_{\rho} = \vect{p} \ ,
\end{aligned}
\end{equation}
where $\operatorname{Pr}(ab|xy)_{\rho}$ denotes the combined outcome probability distribution that would be obtained from $\rho_{ABE}$ (and some measurements), and $\vect{p}$ represents the measurement distribution Alice and Bob actually observe. We observe that after Alice and Bob perform the key-generating measurements, the resulting tripartite state is of the form
\begin{equation} \label{eq:betainitialstate}
\sum_{a,b \in\{0,1\}} \operatorname{Pr}(ab)  \ketbra{ab}{ab} \otimes \rho_{E|ab} \ ,
\end{equation}
where for brevity we use $\operatorname{Pr}(ab)$ to denote the probability of getting outcomes $(a,b)$ when the key-generating measurements are performed, i.e.~$\operatorname{Pr}(ab|00)_{\rho}$.

To turn this optimization into an SDP, we first note that for any pair $(\rho_{E|00}, \rho_{E|11})$, there exists a measurement Eve can perform that leaves the fidelity invariant \cite{fuchs1996mathematical}. 
Also, since this measurement is on Eve's system only, performing it does not change the value $\operatorname{Pr}(ab|xy)_{\rho}$ in the constraint either.
Therefore, given any feasible $\rho_{ABE}$ and measurements in the optimization, we can produce another feasible state and measurements with the same objective value but with Eve having performed the \cite{fuchs1996mathematical} measurement
that leaves the fidelity between the original $(\rho_{E|00}, \rho_{E|11})$ states invariant. 

After performing this measurement, the state~\eqref{eq:betainitialstate} becomes
\begin{equation} \label{eq:Initialstate}
\sum_{a,b\in\{0,1\}} \sum_{i} \operatorname{Pr}(ab| i)  \operatorname{Pr}(i) \ketbra{abi}{abi} \ ,
\end{equation}
where the index $i$ represents the possible outcomes for Eve's measurement (we do not limit the number of such outcomes for now). For these states, the fidelity can be written (assuming the distribution is symmetrized) as
\begin{equation} \label{eq:fidfct1}
\operatorname{F}(\rho_{E|00},\rho_{E|11}) =  \sum_{i} \frac{\sqrt{ \operatorname{Pr}(00| i) \operatorname{Pr}(11| i) }}{\operatorname{Pr}(00)}  \operatorname{Pr}(i) \ .
\end{equation}
As for the constraints, we note that this measurement by Eve commutes with Alice and Bob's measurements, hence we can write $\operatorname{Pr}(ab|xy)_{\rho} = \sum_{i} \operatorname{Pr}(i) \vect{p^{i}}$, where $\vect{p^{i}}$ denotes the Alice-Bob distribution conditioned on Eve getting outcome $i$. Note that $\vect{p^{i}}$ is always a distribution realizable by quantum states and measurements, since conditioning on Eve getting outcome $i$ produces a valid quantum state on Alice and Bob's systems.

The solution to \eqref{eq:exactfidopt1} is therefore equal to the output of the optimization problem 
\begin{equation}  \label{eq:exactfidopt2}
\begin{aligned}
\inf_{\operatorname{Pr}(i),\vect{p^{i}}} \quad &  \sum_{i} \frac{\sqrt{ \operatorname{Pr}(00| i) \operatorname{Pr}(11| i) }}{\operatorname{Pr}(00)}   \operatorname{Pr}(i) \\
\textrm{s.t.} \quad &\sum_{i} \operatorname{Pr}(i) \vect{p}^{{i}} =\vect{p} \\
\quad &\vect{p}^{i} \in  \mathcal{Q}_{\mathcal{X},\mathcal{Y}} \\
\quad &\operatorname{Pr}(i) \in   \mathcal{P(\mathcal{I})} \ ,
\end{aligned}
\end{equation}
where $\mathcal{Q}_{\mathcal{X},\mathcal{Y}}$ represents the set of quantum realizable distributions, and $\mathcal{P(\mathcal{I})}$ is the set of probability distributions $\operatorname{Pr}(i)$ on Eve's (now classical) side-information. We note that the constraints can be relaxed to a convergent hierarchy of SDP conditions, following the approach in \cite{navascues2008convergent,bancal2013randomness,Nieto_Silleras_2014}, but the objective function is not affine. To address this, we show in Sec.~\ref{Sec:Polytopes} that since the objective function is a convex sum of bounded concave functions, it can be approximated arbitrarily well using upper envelopes of convex polytopes. This will allow us to lower-bound this optimization using an SDP hierarchy, and do so without any 
knowledge about the dimension of Eve's system other than the assumption that all Hilbert spaces are finite-dimensional. Critically, SDPs have the important property that they yield certified lower bounds on the minimum value (via the dual value), so we can be certain that we truly have a lower bound on the optimization~\eqref{eq:exactfidopt1}. 
Our approach is based on the SDP reduction in \cite{himbeeck2019correlations}; however, we give a more detailed description and convergence analysis 
for the situation where the optimization involves concave functions of more than one variable (which is required in our work but not necessarily in~\cite{himbeeck2019correlations}).

We remark that in this work, we focus on the situation where the constraints in~\eqref{eq:exactfidopt2} involve the full list of output probabilities. However, our analysis generalizes straightforwardly to situations where the constraints only consist of one or more linear combinations of these probabilities (though it is still necessary to have an estimate of the $\operatorname{Pr}(00)$ term in the denominator of the objective function), which can slightly simplify the corresponding DIQKD protocols. 

\subsection{Results From SDP}\label{sec:thresholds}
With the exception of the 2-input scenario for both parties (where Theorem~\ref{Theoremsuffdist} can be applied instead of Theorem~\ref{Theoremsufffid1}), previous bounds for the fidelity were calculated via the Fuchs--van de Graaf inequalities \cite{tan2019advantage,fuchs1997cryptographic}. To compare this against our method, we consider the depolarizing noise model \cite{nielsen_chuang_2010}, i.e.~where the observed statistics have the form
\begin{equation}
\left(1-2q\right) \operatorname{Pr_{target}}(ab|xy) +q/2 \ ,
\end{equation}
where $\operatorname{Pr_{target}}$ refers to some target distribution in the absence of noise. We consider three possible choices for the target distribution, and show our results for these scenarios in Fig.~\ref{fig:grouped}:
\begin{enumerate}[label=(\alph*)]
	\item Both Alice and Bob have four possible measurement settings. The target distribution is generated by:
	\begin{itemize}
		\item $\ket{\phi^+}=\left(\ket{00}+\ket{11}\right)/\sqrt{2}$ 
		\item  $A_0=Z$, $A_1=\left(X+Z\right)/\sqrt{2}$, $A_2=X$, 
		
		$A_3=\left(X-Z\right)/\sqrt{2}$
		\item  $B_0=Z$, $B_1=\left(X+Z\right)/\sqrt{2}$, $B_2=X$, 
		
		$B_3=\left(X-Z\right)/\sqrt{2}$
	\end{itemize}
		
	\item Alice has two measurement settings, whereas Bob has three. The target distribution is generated by:
	\begin{itemize}
		\item $\ket{\phi^+}=\left(\ket{00}+\ket{11}\right)/\sqrt{2}$ 
		\item  $A_0=Z$, $A_1=X$
		\item  $B_0=Z$, $B_1=\left(X+Z\right)/\sqrt{2}$, 
		
		$B_2=\left(X-Z\right)/\sqrt{2}$
	\end{itemize}
	
	\item Both Alice and Bob have two possible measurement settings. The target distribution is generated by:
	\begin{itemize}
		\item $\ket{\phi^+}=\left(\ket{00}+\ket{11}\right)/\sqrt{2}$ 
		\item  $A_0=Z$, $A_1=X$
		\item  $B_0=\left(X+Z\right)/\sqrt{2}$, $B_1=\left(X-Z\right)/\sqrt{2}$
	\end{itemize}
\end{enumerate}

\begin{figure}
\subfloat[]{\label{fig:4_4}
\includegraphics[width=0.45\textwidth]{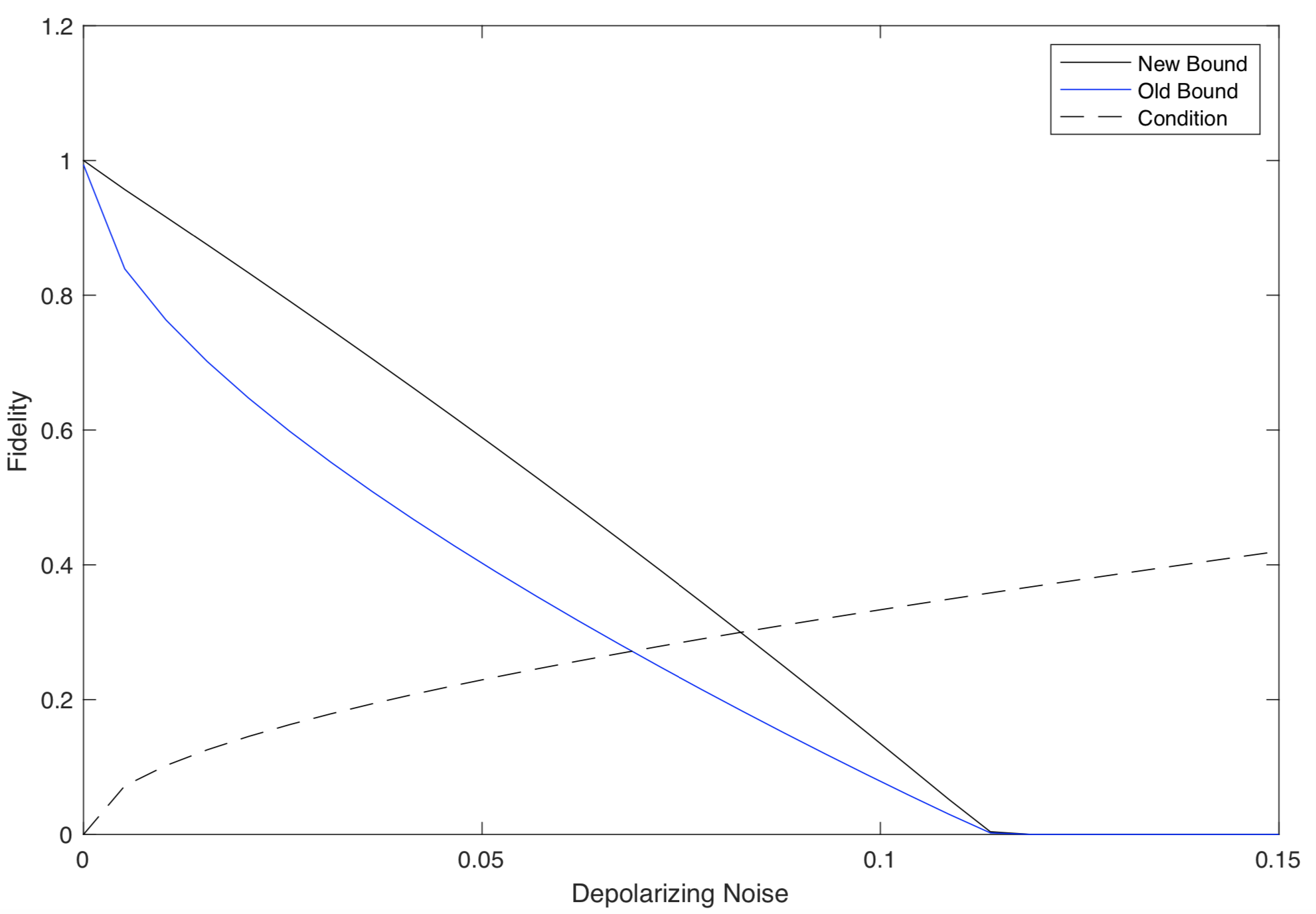}
} \\
\subfloat[]{\label{fig:3_2}
\includegraphics[width=0.45\textwidth]{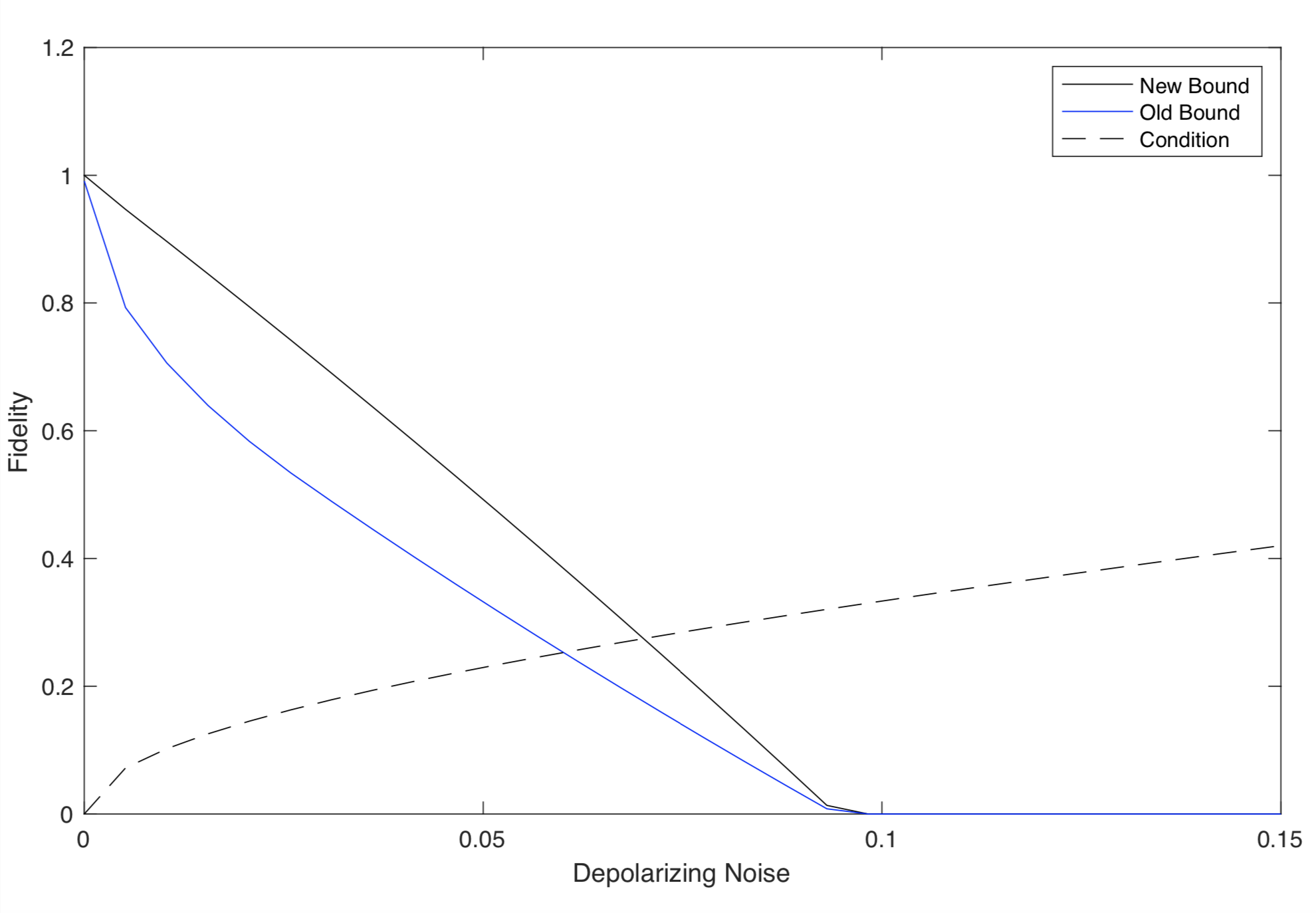}
} \\
\subfloat[]{\label{fig:2_2}
\includegraphics[width=0.45\textwidth]{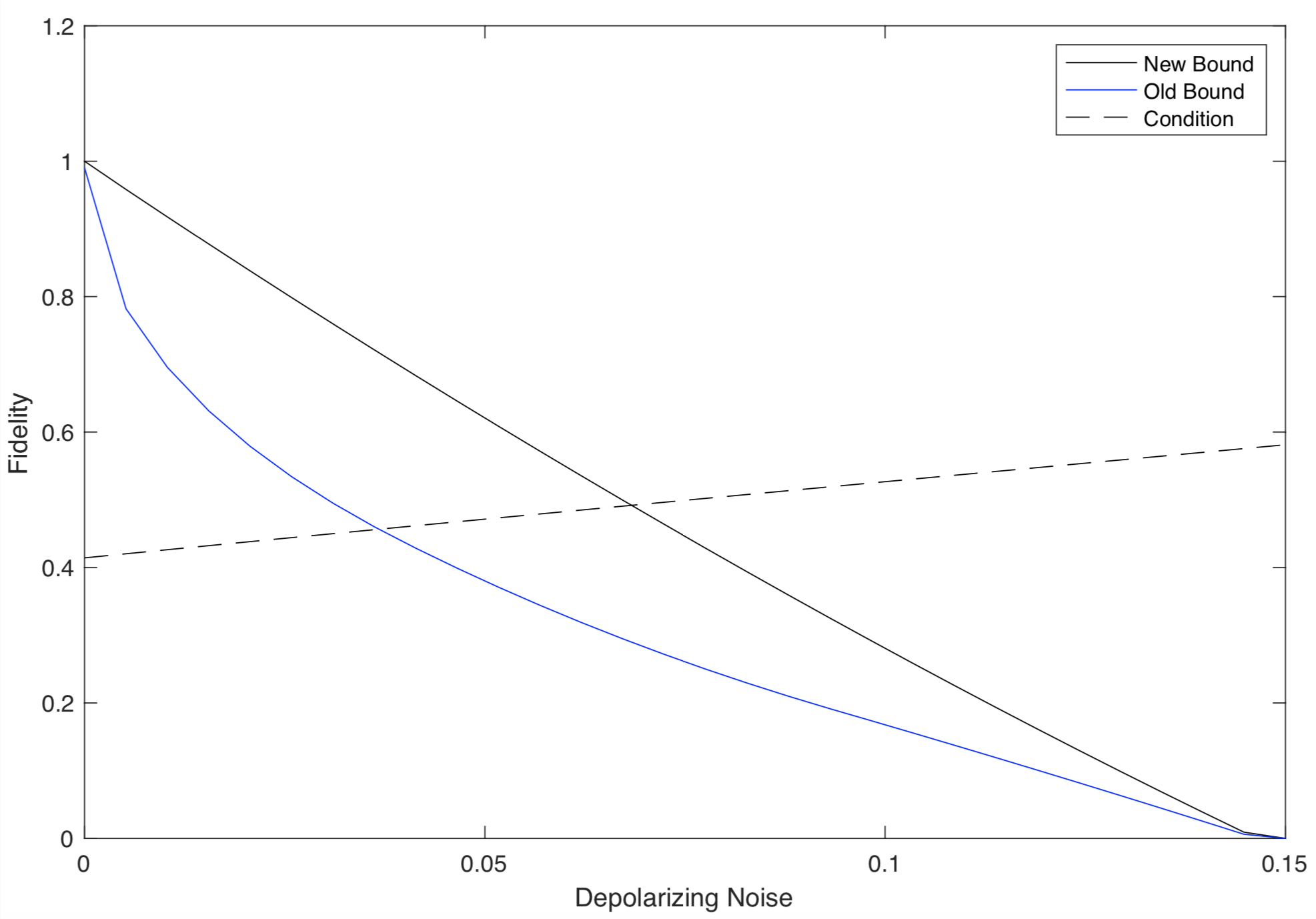}
}
\caption{Fidelity bounds as a function of depolarizing noise in several scenarios (described in main text), regarding the repetition-code protocol defined in Sec.~\ref{sec:protocol}. The blue and black solid lines respectively represent the fidelity bounds derived via the previous approach (based on the Fuchs--van de Graaf inequality) and our new algorithm. It can be seen that the latter yields substantially better bounds. The dashed lines show the value of $\sqrt{\epsilon/(1-\epsilon)}$, so the points where they intersect the solid lines give the threshold values for which advantage distillation is possible according to the condition~\eqref{eq:sufffidcond1}. For our approach, these thresholds  are $q \approx 8.3 \% $, $q \approx 7.0 \% $, and $q \approx 6.7 \% $ (from top to bottom in the scenarios shown here).
\label{fig:grouped}
}
\end{figure}

Case (a) is meant to include the Mayers-Yao self-test \cite{10.5555/2011827.2011830} and measurements that maximize the CHSH value. Alternatively, it can be viewed as having both parties perform all four of the measurements from (c). The results of \cite{tan2019advantage} were able to prove security of this advantage distillation protocol up to $q \approx 6.8 \% $ in this case. We manage to improve the noise tolerance to $q \approx 8.3 \% $, which represents an increase of $1.5 \% $.

As for case (b), the measurements $A_0, A_1, B_1, B_2$ maximize the CHSH value, and the key-generating measurement $B_0$ is chosen such that the QBER is zero in $\operatorname{Pr_{target}}$. Again, our approach allows us to improve the noise tolerance threshold from $q \approx 6.0 \% $ to $q \approx 7.0 \% $.

The final case is a simple CHSH-maximizing setup, where both parties only have two measurements (this is similar to (b), but without the QBER-minimizing measurement for Bob). If we apply Theorem~\ref{Theoremsufffid1} for this case, our approach improves the threshold from $q \approx 3.6 \% $ to $q \approx 6.7 \% $. Also, if we instead optimize the measurements for robustness against depolarizing noise, this threshold can be increased to $q \approx 7.6 \% $ (these optimized measurements correspond to measurements in the $x$-$z$ plane at angles $\theta_{A_0}=0$, $\theta_{A_1} \approx 4.50$, $\theta_{B_0}\approx 3.61$, and $\theta_{B_1}\approx 5.39$ from the $z$-axis).

However, it is important to note that in case (c), Theorem~\ref{Theoremsuffdist} could be applied instead to yield a noise tolerance bound of $q \approx 7.7 \% $, or $q \approx 9.1 \% $ with optimized measurements~\cite{tan2019advantage}. These values are higher than those obtained above by applying our approach to case (c) with Theorem~\ref{Theoremsufffid1}. This gives strong evidence that the sufficient condition in Theorem~\ref{Theoremsufffid1} is not a necessary one, because our approach should yield threshold values close to the optimal ones that could be obtained based only on that sufficient condition. {(In principle, there may still be a gap between the true fidelity values and the bounds we computed; however, we consider this somewhat unlikely --- see Sec.~\ref{sec:methodsSDPalg}.)} Furthermore, we remark that it also suggests that the states $\rho_{E|00},\rho_{E|11}$ that minimize the fidelity in this scenario cannot be pure. This is because in~\cite{tan2019advantage}, the critical inequalites used in the proof of Theorem~\ref{Theoremsufffid1} are all saturated (in the large-$n$ limit, at least) if those states are pure, indicating that the resulting sufficient condition should be basically `tight' (i.e.~a necessary condition) in this case. Since our results indicate that the sufficient condition is not a necessary one, this implies that the relevant states $\rho_{E|00},\rho_{E|11}$ cannot be pure.

Given the above observations, we now conjecture what might be a necessary condition for security of the repetition-code protocol, to serve as a counterpart to Theorem~\ref{Theoremsufffid1}. 

\subsection{Conjectured Necessary Condition for Secret Key Distillation}

While our result in this section can be stated in terms of the fidelity, we note that the reasoning holds for any distinguishability measure $\overlap(\rho,\sigma)$ that has the following properties:
\begin{gather}
\overlap(\rho,\sigma) = \overlap(\sigma,\rho) \ , \label{eq:symm}\\
\overlap(\rho \otimes \rho',\sigma \otimes \sigma') = \overlap(\rho,\sigma) \overlap(\rho',\sigma') \ , \label{eq:mult}\\
\overlap(\rho,\sigma) + d(\rho,\sigma) \geq 1 \ , \label{eq:FvdGlow} \\
\overlap(\rho,\sigma)^2 + d(\rho,\sigma)^2 \leq 1 \ . \label{eq:FvdGupp}
\end{gather}
The first line states that it is symmetric, the second that it is multiplicative across tensor products, and the last two lines correspond to the Fuchs--van de Graaf inequality in the case of the fidelity. An example of another distinguishability measure that satisfies these properties is the \emph{pretty-good fidelity}~\cite{PhysRevA.69.032106, Aud12, 7782776}, defined as $\operatorname{F}_\mathrm{pg}(\rho, \sigma) \coloneqq \operatorname{Tr} \left[ \sqrt{\rho} \sqrt{\sigma} \right]$. To keep our result more general, we shall first present it in terms of any such measure $\overlap$, and discuss at the end of this section which choices yield better bounds. (In fact, we will only need~\eqref{eq:symm}--\eqref{eq:FvdGlow} for this section. We list \eqref{eq:FvdGupp} as well because it can be used when studying \emph{sufficient} conditions; see Sec.~\ref{sec:necproof}.)

For any such $\overlap$, we shall show that under a particular assumption, the condition
\begin{equation}  \label{eq:necfidcond1}
\overlap(\rho_{E|00}, \rho_{E|11}) > \frac{\epsilon}{(1-\epsilon)}
\end{equation}
is necessary for the repetition-code protocol as described above to achieve positive asymptotic keyrate.
(Note that this claim is restricted to the specific protocol description above; in particular, we require that after the initial blockwise `distillation' procedure, only one-way error correction from Alice to Bob is performed, without any further processing such as another iteration of the blockwise `distillation'.) 
Specifically, the assumption is that given some state and measurements compatible with the observed statistics, Eve can produce some other state and measurements (with the same measurement-outcome probability distribution) that have the same value of $\overlap(\rho_{E|00}, \rho_{E|11})$, but with $\rho_{E|01} = \rho_{E|10}$.
This assumption seems reasonable because $\rho_{E|01} = \rho_{E|10}$ describes a situation where Eve is unable to distinguish the cases where Alice and Bob's outcomes are $01$ versus $10$, i.e.~in some sense this appears to be a `suboptimal' attack by Eve. \markup{(It might seem that this could be trivially satisfied by having Eve erase her side-information conditioned on Alice and Bob obtaining different outcomes. However, for Eve to do so, she would need to know when Alice and Bob obtain different outcomes, and it is not clear that this is always possible for the states in Eve's optimal attacks in DIQKD scenarios. This property does hold for the QKD protocols studied in~\cite{bae2006key,PhysRevA.73.012327}, and they use it as part of the proof that their condition is also necessary.)}

To prove that \eqref{eq:necfidcond1}
is in fact necessary (given the stated assumption), we show that if it is not satisfied, then regardless of the choice of block size in the repetition-code protocol, in each block Eve can always produce a classical bit $C''$ such that
\begin{equation}\label{eq:neccondBSC}
\operatorname{H}(C |C'';D=1) \leq \operatorname{H}(C |C';D=1).
\end{equation}
This in turn implies (as noted in~\cite{PhysRevA.73.012327}, using the results for binary symmetric channels in~\cite{256484}) that the repetition-code protocol as described above cannot achieve positive asymptotic keyrate. To prove that Eve can indeed do this, we first derive an `intermediate' implication (we give the proof in Sec.~\ref{sec:necproof}, based on the arguments in \cite{tan2019advantage,bae2006key,PhysRevA.73.012327}): 
\begin{prop} \label{prop:neccondprop1}
Let $\overlap$ be a function satisfying~\eqref{eq:symm}--\eqref{eq:FvdGlow}.
If Eve's side-information satisfies $\rho_{E|01} = \rho_{E|10}$ and
\begin{equation}  \label{eq:nnecfidcond2}
\overlap(\rho_{E|00}, \rho_{E|11}) \leq \frac{\epsilon}{(1-\epsilon)} \ ,
\end{equation}
then for all $n$, Eve can use the information available to her (i.e.~$\mathbf{E}$, $\mathbf{M}$, and $D=1$) to construct a guess $C''$ for the bit $C$ that satisfies
\begin{equation} \label{eq:nprobcond1}
\operatorname{Pr}(C \neq C''|D=1) \leq \operatorname{Pr}(C \neq C'|D=1) \ .
\end{equation} 
\end{prop}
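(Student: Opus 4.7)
The plan is to bound Eve's optimal guessing error directly by a Helstrom-type argument applied to her per-block conditional states. First I would pin down the conditional distribution of $(C,C')$ given $D=1$ and the message $\mathbf{M}$: acceptance forces $a_i\oplus b_i$ to be constant across the block (equal to $C\oplus C'$), and after symmetrization $\operatorname{Pr}(00)=\operatorname{Pr}(11)=(1-\epsilon)/2$ and $\operatorname{Pr}(01)=\operatorname{Pr}(10)=\epsilon/2$. A short calculation then gives $\operatorname{Pr}(C\neq C'|D=1)=\epsilon^n/\left((1-\epsilon)^n+\epsilon^n\right)$, which is the target value Eve's error probability needs to match.

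Next I would write Eve's conditional state on $\mathbf{E}$, given $C=c$ together with $D=1$ and $\mathbf{M}$, as a mixture over the two possible values of $C'$: with $\alpha=(1-\epsilon)^n/\left((1-\epsilon)^n+\epsilon^n\right)$, $\rho_{=,c}=\bigotimes_i \rho_{E|m_i\oplus c,\,m_i\oplus c}$, and $\rho_{\neq}=\bigotimes_i \rho_{E|m_i,\,\bar{m}_i}$, one finds $\rho_{\mathbf{E}|C=c}=\alpha\,\rho_{=,c}+(1-\alpha)\,\rho_{\neq}$. The point of the hypothesis $\rho_{E|01}=\rho_{E|10}$ is precisely that it makes $\rho_{\neq}$ independent of $c$, so the $(1-\alpha)$-weighted piece cancels when one takes the difference $\rho_{\mathbf{E}|C=0}-\rho_{\mathbf{E}|C=1}$. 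This cancellation is the single place where the assumption earns its keep.

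Then I would apply Holevo--Helstrom: Eve's minimal error probability when guessing $C$ from her quantum state is $\frac{1}{2}\bigl(1-d(\rho_{\mathbf{E}|C=0},\rho_{\mathbf{E}|C=1})\bigr)$, and by the previous cancellation this trace distance equals $\alpha\,d(\rho_{=,0},\rho_{=,1})$. The three remaining axioms on $\overlap$ then plug in cleanly: \eqref{eq:FvdGlow} gives $d(\rho_{=,0},\rho_{=,1})\geq 1-\overlap(\rho_{=,0},\rho_{=,1})$; multiplicativity \eqref{eq:mult} together with symmetry \eqref{eq:symm} collapse $\overlap(\rho_{=,0},\rho_{=,1})$ to $\overlap(\rho_{E|00},\rho_{E|11})^n$ regardless of the bits $m_i$; and the hypothesis \eqref{eq:nnecfidcond2} bounds this last quantity by $\epsilon^n/(1-\epsilon)^n$. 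Substituting back, the estimate simplifies to $\operatorname{Pr}(C\neq C''|D=1,\mathbf{M})\leq \epsilon^n/\left((1-\epsilon)^n+\epsilon^n\right)$, which by the first step equals $\operatorname{Pr}(C\neq C'|D=1)$; averaging over $\mathbf{M}$ finishes the proof.

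The main obstacle is genuinely just the bookkeeping: correctly identifying the conditional mixture decomposition of $\rho_{\mathbf{E}|C=c}$, and recognizing that $\rho_{E|01}=\rho_{E|10}$ is exactly the ingredient needed for the off-diagonal part to cancel out of the relevant trace distance. Once that structural observation is in place, the rest is a routine chain of the four axioms together with Holevo--Helstrom, and the final numerical match to $\operatorname{Pr}(C\neq C'|D=1)$ simply pops out.
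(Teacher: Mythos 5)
Your proposal is correct and follows essentially the same route as the paper's proof: the Helstrom bound for Eve's guessing error, the mixture decomposition of the conditional states $\omega_c$ given $D=1$ and $\mathbf{M}=\mathbf{m}$, the Fuchs--van de Graaf-type bound~\eqref{eq:FvdGlow}, multiplicativity and symmetry to collapse the block quantity to $\overlap(\rho_{E|00},\rho_{E|11})^n$, and the same closing algebra yielding $\delta_n=\epsilon^n/(\epsilon^n+(1-\epsilon)^n)$. The only cosmetic difference is that you cancel the $C\neq C'$ component exactly in the difference $\omega_0-\omega_1$, whereas the paper reaches the same expression via the reverse triangle inequality and then kills the cross term using $\rho_{E|01}=\rho_{E|10}$; note also that raising the hypothesis~\eqref{eq:nnecfidcond2} to the $n$-th power tacitly uses $\overlap\geq 0$, which (as the paper points out) follows from~\eqref{eq:FvdGlow} together with $d\leq 1$.
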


We now observe that the following relations hold (the first by Fano's inequality, the second by a straightforward calculation using the fact that the bits $C,C'$ have uniform marginal distributions in the accepted blocks):
\begin{equation} 
\begin{aligned}
\operatorname{H}(C |C'';D=1) &\leq h_2(\operatorname{Pr}(C \neq C''|D=1)) \\ 
\operatorname{H}(C |C';D=1) &= h_2(\operatorname{Pr}(C \neq C'|D=1)) \ ,
\end{aligned}
\end{equation} 
where $h_2$ is the binary entropy function. With this we see that Eve can indeed produce a bit such that~\eqref{eq:neccondBSC} holds, thereby concluding the proof.

If it could be shown that $\rho_{E|01} = \rho_{E|10}$ is in fact always possible to achieve for Eve without compromising $\overlap(\rho_{E|00}, \rho_{E|11})$ or the post-measurement probability distribution, then \eqref{eq:necfidcond1} would genuinely be a necessary condition for security. It could then be used to find upper bounds for noise tolerance of the repetition-code protocol.

Regarding specific choices of the measure $\overlap$, note that the pretty-good fidelity and the fidelity are related by $\operatorname{F}(\rho, \sigma) \geq \operatorname{F}_\mathrm{pg}(\rho, \sigma) \geq \operatorname{F}(\rho, \sigma)^2$~\cite{Aud12}, with the first inequality being saturated for commuting states and the second inequality for pure states. In particular, the $\operatorname{F}(\rho, \sigma) \geq \operatorname{F}_\mathrm{pg}(\rho, \sigma)$ side of the inequality implies that when aiming to find upper bounds on noise tolerance of this protocol using the condition~\eqref{eq:necfidcond1}, it is always better to consider pretty-good fidelity rather than fidelity (since all states satisfying the condition~\eqref{eq:necfidcond1} with $\overlap=\operatorname{F}_\mathrm{pg}$ also must satisfy it with $\overlap=\operatorname{F}$, but not vice versa). We leave for future work the question of whether there are other more useful choices for the measure $\overlap$.

\section{Discussion}
We discuss several consequences of our findings in this section. With regards to noise tolerances, our results show that by calculating bounds directly via the fidelity, a significant improvement can be achieved over the results based on Theorem~\ref{Theoremsufffid1} in~\cite{tan2019advantage}. This is especially important for the 3-input-scenario for Bob, as the previous advantage distillation bound falls well short of the bound for standard one-way error correction (even when only accounting for the QBER and CHSH value) \cite{pironio2009deviceindependent,woodhead2020deviceindependent,sekatski2020deviceindependent}. As advantage distillation improves the key rate for device-dependent QKD \cite{1176619,PhysRevA.66.060302,renner2005security,bae2006key,PhysRevA.73.012327,Watanabe_2007,Khatri2017NumericalEF}, it is expected to behave analogously for the device-independent setting. Our bound of $q \approx 7.0 \% $ lies within $0.15 \% $ of the noise threshold for one-way protocols using the CHSH inequality \cite{pironio2009deviceindependent} and within $0.40 \% $ of those using asymmetric CHSH inequalities \cite{woodhead2020deviceindependent,sekatski2020deviceindependent}, thereby substantially reducing this gap; however, it still does not yield an overall improvement. Again, this suggests that the sufficient condition~\eqref{eq:sufffidcond1} is in fact not necessary.

This brings us to the question of finding a condition that is both necessary and sufficient for security (of the repetition-code protocol) in the DIQKD setting. Note that while one could view the results of~\cite{bae2006key,PhysRevA.73.012327} as stating that condition~\eqref{eq:sufffidcond1} is both necessary and sufficient in the device-dependent QKD scenarios studied there, there are some subtleties to consider. Namely, that condition could be rewritten in several ways that are equivalent in those QKD scenarios, but not necessarily in DIQKD. For instance, in those QKD scenarios the states $\rho_{E|ab}$ are all pure, which means that $\operatorname{F}(\rho_{E|00}, \rho_{E|11})^2$ could be rewritten as $1-d(\rho_{E|00}, \rho_{E|11})^2$; however, this equivalence may not hold in the DIQKD setting, where those states might not be pure in general. Hence if we think in terms of trying to extend the necessary and sufficient condition in~\cite{bae2006key,PhysRevA.73.012327} to DIQKD, we would first need to address the question of finding the `right' way to formulate that condition.

Indeed, our findings raise the question of whether attempting to determine security via $\operatorname{F}(\rho_{E|00}, \rho_{E|11})$ is the right approach at all, because of the following informal argument. First, assuming that the fidelity bounds we obtained were essentially tight, our results indicate that there are scenarios where condition~\eqref{eq:sufffidcond1} is violated but the repetition-code protocol is still secure (via Theorem~\ref{Theoremsuffdist} instead), implying that it is not a necessary condition. However, if it is not necessary, it is not immediately clear how one might improve upon it.
In particular, it seems unlikely that our conjectured necessary condition~\eqref{eq:necfidcond1} could also be sufficient --- after all, for the device-dependent QKD protocols studied in~\cite{bae2006key,PhysRevA.73.012327}, it is condition~\eqref{eq:sufffidcond1} rather than~\eqref{eq:necfidcond1} that is both sufficient and necessary. 
Since a Fuchs--van de Graaf inequality was used to incorporate the fidelity into the security condition and this inequality is most likely the reason that~\eqref{eq:sufffidcond1} is not necessary, finding a new, completely device-independent approach might necessitate using different inequalities. 

A speculative, but interesting, alternative approach could be to instead consider the (non-logarithmic) quantum Chernoff bound \cite{Audenaert_2007,10.1214/08-AOS593}, 
\begin{equation}
Q(\rho,\sigma) \coloneqq \inf_{0 \leq s \leq 1} \operatorname{Tr} \left[ \rho^s \sigma^{1-s} \right] \ .
\end{equation}
This is because this measure yields asymptotically tight bounds on the distinguishability of the states $\rho^{\otimes n}$ and $\sigma^{\otimes n}$ \cite{10.1214/08-AOS593}, 
which we might be able to use.
However, there is still some work that needs to be done before noise tolerances can be calculated via this method. For example, in contrast to the fidelity, it still not known whether there exists a measurement that preserves this distinguishability measure (this is the main reason we could construct an SDP for bounding the fidelity). Moreover, a security condition for the repetition-code protocol will in all likelihood require a measure of distinguishability between states that are not of the form $\rho^{\otimes n}$ and $\sigma^{\otimes n}$, albeit with some similarities (see Sec.~\ref{sec:necproof}). Hence one would need to investigate which aspects of the proof in~\cite{10.1214/08-AOS593} could be generalized to such states as well. We discuss this further in Sec.~\ref{sec:necproof}.

Similar to \cite{pironio2009deviceindependent}, one could conduct a qubit analysis in the hopes that this approach produces fidelity bounds that are `strong enough'. We show in \cite{HahnThesis}, however, that this is not the case. Moreover, we prove for maximal CHSH violation, i.e. $S=2\sqrt{2}$, that $\operatorname{F}(\rho_{E|00}, \rho_{E|11})=1$ must hold for qubit strategies. As can be seen in \cite{HahnThesis}, this will no longer generally be the case in higher dimensions.

In principle, by combining the bounds we computed here with the security proof in~\cite{tan2019advantage}, one could compute lower bounds on the keyrates under the I.I.D. assumption (both in the asymptotic limit and for finite sample sizes, by using the finite version of the quantum asymptotic equipartition property). However, some numerical estimates we performed indicate that the resulting values are very low, even in the asymptotic case. Informally, this is likely because the proof in~\cite{tan2019advantage} bounds the von Neumann entropy in terms of fidelity with an inequality that is suboptimal in this context, as previously discussed. However, in the case of device-dependent QKD, the keyrates of this protocol are more reasonable, so it is possible that with better proof techniques, the same may hold for DIQKD.

We conclude by mentioning that it should be possible to use our algorithm to obtain keyrate lower bounds for one-way communication protocols as well. This is because the keyrate in such protocols is given by~\cite{Devetak_2005} 
\begin{equation}
\operatorname{H}(A_0|E) - \operatorname{H}(A_0|B_0) \ ,
\end{equation}
and the main challenge in computing this value is finding lower bounds on $\operatorname{H}(A_0|E)$ (again, the $\operatorname{H}(A_0|B_0)$ term is straightforward to handle, e.g.~by estimating the QBER). Taking $A_0$ to be symmetrized as previously mentioned, we can apply the following inequality~\cite{PhysRevLett.105.040505}:
\begin{equation} \label{eq:entfidbnd}
\operatorname{H}(A_0|E) \geq 1-h_2\left(\frac{1-\operatorname{F}(\rho_{E|0}, \rho_{E|1})}{2}\right) \ ,
\end{equation}
where the states $\rho_{E|a_0}$ refer to conditioning on the outcome $A_0$ only. Hence $\operatorname{H}(A_0|E)$ can be bounded in terms of $\operatorname{F}(\rho_{E|0}, \rho_{E|1})$. It should be straightforward to adapt our algorithm to bound such a fidelity expression as well, in which case our approach would also be useful to compute keyrates for non-advantage distillation setups. We aim to investigate this in future work.

\section{Methods}
\subsection{Fidelity For Post-Measurement Tripartite States}
\begin{prop}
The fidelity, $\operatorname{F}(\rho_{E|00},\rho_{E|11})$, of a state of the form \eqref{eq:Initialstate} is given by 
\begin{equation}
\operatorname{F}(\rho_{E|00},\rho_{E|11}) =\sum_{i} \frac{\sqrt{ \operatorname{Pr}(00| i) \operatorname{Pr}(11| i) }}{\operatorname{Pr}(00)}  \operatorname{Pr}(i) \ .
\end{equation}
\end{prop}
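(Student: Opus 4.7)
My plan is to exploit the fact that in the state~\eqref{eq:Initialstate}, Eve's side-information is already classical (diagonal in the $\ket{i}$ basis), so the conditional states $\rho_{E|ab}$ obtained by conditioning on Alice and Bob's outcomes are simultaneously diagonal, and the fidelity collapses to a simple Bhattacharyya-type sum.

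First I would read off the conditional states from the form~\eqref{eq:Initialstate}. Using Bayes' rule, conditioning on Alice and Bob obtaining outcomes $a,b$ gives
\begin{equation*}
\rho_{E|ab} = \sum_i \operatorname{Pr}(i|ab) \ketbra{i}{i} = \sum_i \frac{\operatorname{Pr}(ab|i)\operatorname{Pr}(i)}{\operatorname{Pr}(ab)} \ketbra{i}{i},
\end{equation*}
with $\operatorname{Pr}(ab) = \sum_i \operatorname{Pr}(ab|i)\operatorname{Pr}(i)$ being exactly the marginal probability that Alice and Bob obtain $(a,b)$ on their key-generating measurements.

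Next I would invoke the standard fact that for two states $\rho = \sum_i p_i \ketbra{i}{i}$ and $\sigma = \sum_i q_i \ketbra{i}{i}$ that are diagonal in a common orthonormal basis, $\sqrt{\rho}\sqrt{\sigma} = \sum_i \sqrt{p_i q_i}\, \ketbra{i}{i}$ is positive, so $\operatorname{F}(\rho,\sigma) = \lVert \sqrt{\rho}\sqrt{\sigma}\rVert_1 = \sum_i \sqrt{p_i q_i}$. Applied to $\rho_{E|00}$ and $\rho_{E|11}$, this yields
\begin{equation*}
\operatorname{F}(\rho_{E|00},\rho_{E|11}) = \sum_i \sqrt{\frac{\operatorname{Pr}(00|i)\operatorname{Pr}(i)}{\operatorname{Pr}(00)} \cdot \frac{\operatorname{Pr}(11|i)\operatorname{Pr}(i)}{\operatorname{Pr}(11)}} = \sum_i \frac{\operatorname{Pr}(i)\sqrt{\operatorname{Pr}(00|i)\operatorname{Pr}(11|i)}}{\sqrt{\operatorname{Pr}(00)\operatorname{Pr}(11)}}.
\end{equation*}

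Finally I would use the symmetrization assumption stated just before~\eqref{eq:fidfct1}, which ensures $\operatorname{Pr}(00) = \operatorname{Pr}(11)$, so $\sqrt{\operatorname{Pr}(00)\operatorname{Pr}(11)} = \operatorname{Pr}(00)$ and the expression reduces to the claimed formula. There is no real obstacle here; the only subtle point is to make clear that the symmetrization hypothesis carries over from~\eqref{eq:fidfct1} to this proposition, since without it the denominator would be $\sqrt{\operatorname{Pr}(00)\operatorname{Pr}(11)}$ rather than $\operatorname{Pr}(00)$. The proof is essentially bookkeeping once one observes that Eve's classicalized side-information makes both conditional states diagonal in the same basis.
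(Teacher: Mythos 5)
Your proposal is correct and follows essentially the same route as the paper: identify the conditional states $\rho_{E|ab}=\sum_i \operatorname{Pr}(i|ab)\ketbra{i}{i}$ via Bayes' rule, exploit that they are diagonal in a common basis to evaluate the fidelity as the Bhattacharyya sum, and invoke symmetrization to set $\operatorname{Pr}(00)=\operatorname{Pr}(11)$. The only cosmetic difference is that the paper evaluates $\operatorname{Tr}\bigl(\sqrt{\rho_{E|11}^{1/2}\rho_{E|00}\rho_{E|11}^{1/2}}\bigr)$ directly rather than arguing $\lVert\sqrt{\rho}\sqrt{\sigma}\rVert_1=\sum_i\sqrt{p_i q_i}$, which for commuting states is the same computation.
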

\begin{proof}
It is easily verified that:
\begin{align}
\rho_{E|00} &= \sum_{i} \frac{\operatorname{Pr}(00| i) \operatorname{Pr}(i) }{\operatorname{Pr}(00)} \ketbra{i}{i} \\
\rho_{E|11} &= \sum_{i} \frac{\operatorname{Pr}(11| i) \operatorname{Pr}(i) }{\operatorname{Pr}(11)} \ketbra{i}{i} \ .
\end{align}
We note that $\operatorname{Pr}(00)=\operatorname{Pr}(11)$ due to the symmetrization step. Since these states are diagonal in the same basis, 
we can directly compute the fidelity:
\begin{align}
&\operatorname{F}(\rho_{E|00},\rho_{E|11}) \\
=& \operatorname{Tr}\left(\sqrt{ \rho_{E|11}^{\frac{1}{2}} \rho_{E|00}  \rho_{E|11}^{\frac{1}{2}}}\right) \\
=& \operatorname{Tr}\left( \sqrt{\sum_{i} \frac{\operatorname{Pr}\left(11| i \right) \operatorname{Pr}\left(00| i\right) \operatorname{Pr}\left(i\right)^2 }{\operatorname{Pr}\left(00\right)^2} \ketbra{i}{i}}\right) \\
=& \sum_{i} \frac{\sqrt{ \operatorname{Pr}(00| i) \operatorname{Pr}(11| i) }}{\operatorname{Pr}(00)}  \operatorname{Pr}(i) \ ,
\end{align}
as claimed.
\end{proof}
\subsection{Approximating The Fidelity With Polytope Hyperplanes} \label{Sec:Polytopes}

Our goal in this section will be to find a lower bound on $\sqrt{ \operatorname{Pr}(00| i) \operatorname{Pr}(11| i) }$ that can be written as a pointwise minimum of affine functions (of $\vect{p}^i$). Note that this function depends only
on $\operatorname{Pr}(00|i)$ and $\operatorname{Pr}(11|i)$. (We need to consider both $\operatorname{Pr}(00|i)$ and $\operatorname{Pr}(11|i)$ without assuming they are equal, as we cannot assume the distributions conditioned on $i$ are still symmetrized. This is because Eve's measurement takes place after the symmetrization step, and does not have to respect the symmetry.) Hence for the purposes of this section, we shall focus on 2-dimensional vectors $\vect{x}$, with the implicit understanding that 
\begin{equation}
\vect{x} = \begin{pmatrix}
x_1 \\
x_2 
\end{pmatrix} = \begin{pmatrix}
\operatorname{Pr}(00|i) \\
\operatorname{Pr}(11|i)
\end{pmatrix} \ .
\end{equation}
An affine function of such a vector $\vect{x}$ straightforwardly defines a corresponding affine function of $\vect{p}^i$, by considering the latter to depend only on the terms $\operatorname{Pr}(00|i)$ and $\operatorname{Pr}(11|i)$.

We begin the construction by defining an appropriate lattice of points:
\begin{defn}
For each $n \in \mathbb{N}$, let $\mathcal{L}_{n}$ denote a uniformly spaced grid of $\left(2^n+1\right) \cdot \left(2^n +1\right)$ points in $[0,1]^2$, i.e.
\begin{equation}
\mathcal{L}_{n} \coloneqq \left\{0,\frac{1}{2^n},\frac{2}{2^n},\dots,1\right\}^2 \ .
\end{equation}
For a concave function $f:[0,1]^2 \to \mathbb{R}$, we define the \emph{$\textit{n}^{th}$ order lattice of $f$} to be the set of $\left(2^n+1\right) \cdot \left(2^n +1\right)$ ordered triples $(x_1,x_2,f(\vect{x}))$ such that $(x_1,x_2) \in \mathcal{L}_{n}$.
\end{defn}
\noindent (The above definition naturally generalizes to higher-dimensional arrays.) 
\begin{prop} \label{propconvhull}
Let $\ptope_n$ denote the convex hull of the union of the $n^{th}$ order lattice of $f(\vect{x})\coloneqq\sqrt{x_1 x_2}$ and the $n^{th}$ order lattice of $g(\vect{x})\coloneqq0$. Then $\ptope_n$ is a convex polytope that lies on or beneath the graph of the function $f(\vect{x})=\sqrt{x_1 x_2}$. 
\end{prop}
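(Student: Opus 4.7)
The plan is to establish the two claims separately: first that $\ptope_n$ is a convex polytope, and then that it sits weakly below the graph of $f(\vect{x}) = \sqrt{x_1 x_2}$. The polytope claim is immediate from the definitions: each $n^{th}$ order lattice consists of exactly $(2^n+1)^2$ ordered triples, so their union is a finite set of points in $\mathbb{R}^3$, and the convex hull of any finite point set in Euclidean space is by definition a convex polytope.

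For the second claim, I would argue by taking an arbitrary point $P = (x_1, x_2, z) \in \ptope_n$ and showing $z \leq \sqrt{x_1 x_2}$. By Carathéodory (or just by the definition of convex hull), $P$ can be written as a convex combination
\begin{equation}
P = \sum_j \lambda_j \bigl(x_1^{(j)}, x_2^{(j)}, z^{(j)}\bigr), \quad \lambda_j \geq 0,\ \sum_j \lambda_j = 1,
\end{equation}
where each summand is one of the generating lattice triples. For the $f$-lattice triples we have $z^{(j)} = \sqrt{x_1^{(j)} x_2^{(j)}} = f(x_1^{(j)}, x_2^{(j)})$, and for the $g$-lattice triples we have $z^{(j)} = 0 \leq f(x_1^{(j)}, x_2^{(j)})$ because $f$ is non-negative on $[0,1]^2$. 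In either case $z^{(j)} \leq f(x_1^{(j)}, x_2^{(j)})$.

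The next step is to invoke concavity. The function $f(\vect{x}) = \sqrt{x_1 x_2}$ is the geometric mean and is well known to be concave on $[0,\infty)^2$ (this can be verified either by computing the Hessian or by applying AM-GM to two convex combinations). Jensen's inequality therefore gives
\begin{equation}
\sum_j \lambda_j f\bigl(x_1^{(j)}, x_2^{(j)}\bigr) \leq f\!\left(\sum_j \lambda_j x_1^{(j)},\, \sum_j \lambda_j x_2^{(j)}\right) = f(x_1, x_2).
\end{equation}
Combining the two inequalities yields $z = \sum_j \lambda_j z^{(j)} \leq f(x_1, x_2)$, which is exactly what we wanted.

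There is no real obstacle here; the only subtlety is confirming concavity of $\sqrt{x_1 x_2}$ on the relevant domain, which I would handle with a brief one-line justification via the Hessian or AM-GM. The argument also makes transparent why both lattices are needed: the $g$-lattice points pin the polytope down to zero along the axes where $f$ vanishes, ensuring $\ptope_n$ has the correct behaviour on the boundary, while the $f$-lattice points ensure the approximation is tight at the chosen grid.
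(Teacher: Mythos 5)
Your proof is correct and follows essentially the same route as the paper's: verify that all generating lattice points lie on or below the graph of $f$ (the $f$-lattice points on it, the $g$-lattice points below it by non-negativity of $f$), then use concavity of $f$ via Jensen's inequality to extend this to arbitrary convex combinations. You simply spell out the convex-combination and Jensen steps more explicitly than the paper does, which is fine.
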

\begin{proof}  
$ $ \newline
\noindent \underline{Step 1}: Show that the $2 \cdot (2^n+1)^2$ lattice points lie on or beneath the graph of $f(\vect{x})$.

\noindent The lattice points of $f(\vect{x})$ automatically lie on the function graph as desired. Also, since $f(\vect{x})$ is always non-negative, the lattice points of $g(\vect{x})$ also lie on or beneath the function graph. 

\noindent \underline{Step 2}: Show that the convex hull of the union of both lattices lies on or beneath the graph of $f(\vect{x})$. 

\noindent All points in the convex hull are a convex sum of the lattice points. As $f(\vect{x})$ is a concave function, such convex sums must lie on or beneath the graph of $f(\vect{x})$ as well. 
\end{proof}

We can use $\ptope_n$ to construct the desired lower bound on $f(\vect{x})$, as follows. {As we shall formally prove in Proposition~\ref{equaldefns}, this process is basically constructing the upper envelope of $\ptope_n$ (i.e.~the function whose graph is the `upper surface' of $\ptope_n$; this is formally defined by~\eqref{eq:fndefmax} below).} 

We first transform $\ptope_n$ to its facet representation~\cite{henk1995basic}, i.e.~the description of the polytope by its facet-defining half-spaces. This gives us a set of inequalities described by parameters $\{(a_j, b_j, c_j, d_j)\}_j$, such that $(x_1, x_2, x_3) \in \ptope_n$ if and only if
\begin{equation} \label{eq:facet}
a_j x_1 + b_j x_2 + c_j x_3 \leq d_j \ \ \forall j \ .
\end{equation}
To only retrieve the facets that will be used to approximate $f(\vect{x})$, i.e.~the facets that describe the upper envelope of the polytope we constructed, keep only the facets for which $c_j > 0$. Geometrically, this corresponds to facets such that the normal vector (directed outwards from the polytope) has a vertical component that points upwards. 

In our case, this means we remove the facets
\begin{align}
-x_3 &\leq 0 \label{eq:removed1} \\
x_1 &\leq 1 \label{eq:removed2} \\ 
x_2 &\leq 1 \ , \label{eq:removed3}
\end{align}
which correspond to a lower horizontal facet and two vertical facets, respectively. (For more general concave $f$, there would be more vertical facets to remove, but the subsequent analysis still holds as it is based only on the fact that we keep exactly the facets with $c_j>0$.)
Let $\mathcal{S}$ denote the set of indices of the remaining facets. For each $j\in\mathcal{S}$, we define a corresponding affine function,
\begin{equation}
h_j(\vect{x}) \coloneqq \frac{1}{c_j}(d_j - a_j x_1 - b_j x_2 ) \ ,
\end{equation}
and use these to define a function (denoted $f_{n}$) that is meant to bound $f(\vect{x})$:
\begin{equation} \label{eq:fndefmin}
f_{n}(\vect{x}) \coloneqq \min_{j\in\mathcal{S}} h_j(\vect{x}) \ .
\end{equation}

We verify that the above procedure indeed produces the upper envelope of $\ptope_n$:

\begin{prop}\label{equaldefns}
For all $n \in \mathbb{N}$ and $\vect{x} \in \left[ 0,1 \right]^2 $, $f_{n}$ as defined above satisfies
\begin{align} 
f_{n}(\vect{x}) = \max \{x_3 \mid (\vect{x},x_3)\in\ptope_n \} \ . \label{eq:fndefmax} 
\end{align}
\end{prop}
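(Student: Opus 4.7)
The plan is to establish~\eqref{eq:fndefmax} by a double inequality. For the ``$\geq$'' direction (i.e.~the maximum is at most $f_n(\vect{x})$), I would take any $(\vect{x},x_3)\in\ptope_n$ and apply the facet inequalities~\eqref{eq:facet} indexed by $\mathcal{S}$: since $c_j > 0$ for $j\in\mathcal{S}$, the inequality $a_j x_1 + b_j x_2 + c_j x_3 \le d_j$ rearranges to $x_3 \leq h_j(\vect{x})$. Taking the minimum over $j\in\mathcal{S}$ yields $x_3 \leq f_n(\vect{x})$, hence $\max\{x_3 \mid (\vect{x},x_3)\in\ptope_n\} \leq f_n(\vect{x})$.

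For the ``$\leq$'' direction, I would show that the point $(\vect{x}, f_n(\vect{x}))$ itself lies in $\ptope_n$, by verifying it satisfies every facet inequality in~\eqref{eq:facet}. These split into three types: the upper-envelope facets ($c_j>0$, $j\in\mathcal{S}$), which are satisfied directly from the definition $f_n(\vect{x}) = \min_j h_j(\vect{x})$ since $c_j h_j(\vect{x}) = d_j - a_j x_1 - b_j x_2$; the ``vertical'' facets (those with $c_j = 0$), which constrain only $\vect{x}$; and the lower horizontal facet~\eqref{eq:removed1}, which amounts to $f_n(\vect{x}) \geq 0$.

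To handle the vertical facets, I would note that $\ptope_n$ is the convex hull of lattice points with $(x_1,x_2)\in[0,1]^2$, so any facet inequality with $c_j=0$ must merely cut out a half-plane containing $[0,1]^2$; since the lattice contains all four corners of the unit square at height $0$ as well as interior points, the only such active facets are exactly $0 \leq x_i \leq 1$ for $i\in\{1,2\}$, all of which are automatic from the hypothesis $\vect{x}\in[0,1]^2$. To handle the lower facet, I would use that the zero-lattice contributes in particular the four corners $(0,0,0),(1,0,0),(0,1,0),(1,1,0)$, whose convex hull is $[0,1]^2\times\{0\}\subset\ptope_n$. Hence $(\vect{x},0)\in\ptope_n$ for every $\vect{x}\in[0,1]^2$, and applying the already-established ``$\geq$'' direction at $x_3 = 0$ gives $0 \leq h_j(\vect{x})$ for every $j\in\mathcal{S}$, so $f_n(\vect{x})\geq 0$.

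The step requiring the most care is the middle one, namely the classification of the $c_j=0$ facets as precisely the bounding constraints of the unit square. Everything else is essentially a rearrangement of the facet inequalities together with the observation that $[0,1]^2\times\{0\}\subseteq\ptope_n$. Once the facet classification is pinned down, the two inclusions combine to give $(\vect{x},f_n(\vect{x}))\in\ptope_n$ and $x_3\leq f_n(\vect{x})$ for all fiber points, which together yield the claimed equality.
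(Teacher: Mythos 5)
Your proof is correct, but it is organized differently from the paper's. The paper never exhibits an explicit maximizer: it argues that dropping the facet inequalities with $c_j \le 0$ from the description of $\ptope_n$ does not change the value of the maximization (the $c_j=0$ ones are feasibility conditions on $\vect{x}$ alone, the $c_j<0$ ones are lower bounds on $x_3$), and then identifies the relaxed maximization with $\min_{j\in\mathcal{S}} h_j$ by comparing subgraphs. You instead prove the two inequalities separately, with the ``$\leq$'' direction resting on the concrete claim that $(\vect{x}, f_n(\vect{x}))$ itself satisfies every facet inequality~\eqref{eq:facet} and hence lies in $\ptope_n$. The ingredients are the same in both arguments --- the sign classification of the $c_j$, the fact that $[0,1]^2\times\{0\}\subseteq\ptope_n$ (which the paper uses only implicitly, as non-emptiness of the fiber), and the rearrangement $a_jx_1+b_jx_2+c_jx_3\le d_j \Leftrightarrow x_3\le h_j(\vect{x})$ for $c_j>0$ --- but your version has the small advantage of making attainment of the maximum immediate, whereas the paper has to invoke compactness separately. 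One minor imprecision: your parenthetical classification of the $c_j=0$ facets as ``exactly $0\le x_i\le 1$'' is not quite right, since $x_1\ge 0$ and $x_2\ge 0$ are not facets of $\ptope_n$ (the slice of $\ptope_n$ at $x_1=0$ is only an edge, because $f$ vanishes there); the paper lists the removed facets as only \eqref{eq:removed1}--\eqref{eq:removed3}. This does not affect your argument, because the property you actually use --- that any facet inequality with $c_j=0$ holds throughout the projection $[0,1]^2$, as witnessed by the floor points $(\vect{x},0)\in\ptope_n$ --- is exactly the right one and does not require knowing the facet list explicitly.
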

\begin{proof}
We first remark that it is indeed valid to write the expression~\eqref{eq:fndefmax} as a maximum rather than a supremum, because by construction of $\ptope_n$, the feasible set in~\eqref{eq:fndefmax} is non-empty (for $\vect{x} \in \left[ 0,1 \right]^2$) and compact. 

To prove the desired equality, we start by considering a fixed $\vect{x} \in \left[ 0,1 \right]^2$, and arguing that the expression~\eqref{eq:fndefmax} is in fact equal to 
\begin{align} 
\max \{x_3 \mid a_j x_1 + b_j x_2 + c_j x_3 \leq d_j \ \ \forall j \in \mathcal{S} \} \ , \label{eq:fndefrelax} 
\end{align}
where the values $a_j,b_j,c_j,d_j$ are from the facet inequalities~\eqref{eq:facet}. This is because $\ptope_n$ is exactly the set of points which satisfy the facet inequalities~\eqref{eq:facet} for all $j$, so the only difference between~\eqref{eq:fndefmax} and~\eqref{eq:fndefrelax} is that the latter maximization has omitted the facet inequalities such that $c_j \leq 0$. Removing these inequalities does not change the maximum value, by the following argument. The inequalities with $c_j=0$ are independent of $x_3$, so either they are satisfied for all $x_3$ or for no $x_3$; however, as previously noted the feasible set of~\eqref{eq:fndefmax} is non-empty (for $\vect{x} \in \left[ 0,1 \right]^2$), so the former must be the case. This implies that removing them does not change the maximum value. As for the inequalities with $c_j < 0$, notice that they are \emph{lower} bounds on $x_3$, which means that removing them also does not change the maximum value (as long as the original maximization~\eqref{eq:fndefmax} is feasible, which it indeed is as noted previously). Thus the expressions~\eqref{eq:fndefmax} and~\eqref{eq:fndefrelax} are equal for any $\vect{x} \in \left[ 0,1 \right]^2$.

It remains to show that the original definition~\eqref{eq:fndefmin} of $f_{n}$ is equal to~\eqref{eq:fndefrelax} (when treating the latter as a function of $\vect{x}$ on the same domain). To do so, we show that they have the same subgraph. The subgraph of~\eqref{eq:fndefmin} is 
\begin{align}
& \{(\vect{x},x_3) \mid x_3 \leq h_j(\vect{x}) \ \ \forall j \in \mathcal{S} \} \\
=& \left\{(\vect{x},x_3) \,\middle|\, x_3 \leq \frac{1}{c_j}(d_j - a_j x_1 - b_j x_2 ) \ \ \forall j \in \mathcal{S} \right\} \\
=& \{(\vect{x},x_3) \mid a_j x_1 + b_j x_2 + c_j x_3 \leq d_j \ \ \forall j \in \mathcal{S} \} \ ,
\end{align}
using the fact that $c_j>0$ for all $j\in\mathcal{S}$. (To be precise, in those expressions $\vect{x}$ should be restricted to the function domain, but the argument at this step holds regardless of whether we take the domain to be $\left[ 0,1 \right]^2$ or $\mathbb{R}^2$.) The last line is the subgraph of~\eqref{eq:fndefrelax}, so indeed the functions are equal. 
\end{proof}

With the formula~\eqref{eq:fndefmax}, we can prove some intuitive properties of $f_{n}$, which will be useful in subsequent arguments:

\begin{prop}  \label{lbound}
\begin{equation} \label{eq:lbound}
f_{n}(\vect{x}) \leq f(\vect{x}) 
\ \ \forall n \in \mathbb{N} \ \ \forall \vect{x} \in \left[ 0,1 \right]^2 
\ ,
\end{equation}
with $f_{n}(\vect{x}) = f(\vect{x})$ whenever $\vect{x} \in \mathcal{L}_{n}$.
\end{prop}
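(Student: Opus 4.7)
The plan is to chain together the two preceding propositions, which between them do essentially all of the work. From Proposition~\ref{equaldefns} we have the geometric characterization $f_{n}(\vect{x}) = \max \{x_3 \mid (\vect{x},x_3)\in\ptope_n \}$, so bounding $f_{n}$ reduces to bounding the height of $\ptope_n$ above the point $\vect{x}$. From Proposition~\ref{propconvhull}, every point of $\ptope_n$ lies on or beneath the graph of $f$, i.e.\ any $(\vect{x},x_3)\in\ptope_n$ with $\vect{x} \in [0,1]^2$ satisfies $x_3 \leq f(\vect{x})$. Taking the maximum of such $x_3$ over feasible points therefore yields $f_{n}(\vect{x}) \leq f(\vect{x})$, which is exactly~\eqref{eq:lbound}.

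For the equality claim on lattice points, I would note that whenever $\vect{x} \in \mathcal{L}_{n}$, the triple $(\vect{x}, f(\vect{x}))$ is, by definition, one of the $(2^n+1)^2$ generating points of the $n^{th}$ order lattice of $f$ used to construct $\ptope_n$. Hence $(\vect{x}, f(\vect{x})) \in \ptope_n$, which means $f(\vect{x})$ is a feasible value in the maximization~\eqref{eq:fndefmax}. Thus $f_{n}(\vect{x}) \geq f(\vect{x})$, and combined with the inequality already established, equality holds on $\mathcal{L}_{n}$.

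I do not anticipate any real obstacle here; the statement is essentially a corollary that packages together the content of Propositions~\ref{propconvhull} and~\ref{equaldefns}. The only verification worth flagging is that the construction of $\ptope_n$ truly contains every $(\vect{x}, f(\vect{x}))$ with $\vect{x} \in \mathcal{L}_{n}$, but this is immediate from the definition of the $n^{th}$ order lattice of $f$, so the proof should be short.
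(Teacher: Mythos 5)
Your proposal is correct and follows essentially the same argument as the paper: the inequality comes from combining the characterization $f_{n}(\vect{x}) = \max \{x_3 \mid (\vect{x},x_3)\in\ptope_n \}$ of Proposition~\ref{equaldefns} with the fact from Proposition~\ref{propconvhull} that $\ptope_n$ lies on or beneath the graph of $f$, and the equality on $\mathcal{L}_{n}$ follows because $(\vect{x}, f(\vect{x}))$ is a generating vertex of $\ptope_n$, giving the reverse inequality $f_{n}(\vect{x}) \geq f(\vect{x})$. No gaps.
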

\begin{proof}
We showed in Proposition~\ref{propconvhull} that $\ptope_n$ lies on or below the graph of $f$. Hence~\eqref{eq:lbound} follows immediately from the formula~\eqref{eq:fndefmax}. 

As for the equality condition, we note that for all $\vect{x} \in \mathcal{L}_{n}$ we have that $(\vect{x}, f(\vect{x}))$ lies in $\ptope_n$ (by construction). Hence~\eqref{eq:fndefmax} implies $f_{n}(\vect{x}) \geq f(\vect{x})$, which implies they must be equal since the reverse inequality~\eqref{eq:lbound} holds in general.
\end{proof}

\begin{prop} \label{monconvergence}
\begin{equation}
f_{n}(\vect{x}) \leq f_{n+1}(\vect{x}) 
\ \ \forall n \in \mathbb{N} \ \ \forall \vect{x} \in \left[ 0,1 \right]^2 
\ .
\end{equation}
\end{prop}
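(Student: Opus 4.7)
The plan is to establish the inequality via a straightforward nesting argument: show that $\ptope_n \subseteq \ptope_{n+1}$, and then apply Proposition~\ref{equaldefns} to conclude that the maximum over a larger set can only increase.

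First, I would observe that the lattices themselves are nested: $\mathcal{L}_{n} \subseteq \mathcal{L}_{n+1}$. This follows because any point $k/2^n$ in $\{0, 1/2^n, 2/2^n, \ldots, 1\}$ can be rewritten as $2k/2^{n+1}$, which lies in $\{0, 1/2^{n+1}, \ldots, 1\}$, so each coordinate of a point in $\mathcal{L}_n$ is a valid coordinate in $\mathcal{L}_{n+1}$.

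Next, since the $n$-th order lattice of $f$ (respectively $g$) is indexed by points of $\mathcal{L}_n$, the inclusion $\mathcal{L}_n \subseteq \mathcal{L}_{n+1}$ implies that the $n$-th order lattice of $f$ is a subset of the $(n+1)$-th order lattice of $f$, and likewise for $g$. Thus, the union of points used to construct $\ptope_n$ is contained in the union of points used to construct $\ptope_{n+1}$. Since $\ptope_{n+1}$ is convex and contains all generators of $\ptope_n$, we conclude $\ptope_n \subseteq \ptope_{n+1}$.

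Finally, applying the characterization \eqref{eq:fndefmax} from Proposition~\ref{equaldefns}, we get
\begin{equation}
f_{n}(\vect{x}) = \max \{x_3 \mid (\vect{x},x_3)\in\ptope_n \} \leq \max \{x_3 \mid (\vect{x},x_3)\in\ptope_{n+1} \} = f_{n+1}(\vect{x}) \ ,
\end{equation}
where the inequality holds because enlarging the feasible set of a maximization can only increase (or preserve) the optimal value; the feasible set in the second maximization is also non-empty for $\vect{x} \in [0,1]^2$ by the same reasoning as in Proposition~\ref{equaldefns}.

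I do not anticipate any real obstacles for this proposition, as the argument is a routine monotonicity observation once Proposition~\ref{equaldefns} is available. The only subtle point is verifying the lattice inclusion $\mathcal{L}_n \subseteq \mathcal{L}_{n+1}$, but this is immediate from the dyadic structure of the grid.
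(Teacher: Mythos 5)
Your proof is correct and follows essentially the same route as the paper's: both establish $\mathcal{L}_n \subseteq \mathcal{L}_{n+1}$ via the dyadic refinement, deduce $\ptope_n \subseteq \ptope_{n+1}$, and conclude via the max characterization~\eqref{eq:fndefmax}. You simply spell out the details more explicitly than the paper does.
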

\begin{proof}  
By increasing $n$ by one, we don't remove any lattice points, but rather just add additional points by halving the intervals in both directions. Thus we have ${\ptope_n \subseteq \ptope_{n+1}}$, which yields the desired inequality via~\eqref{eq:fndefmax}.
\end{proof}

We now show the sequence $f_{n}$ indeed converges uniformly to $f$, so it yields arbitrarily tight bounds. The main intuition is that $f_{n}$ forms a monotone sequence of 
uniformly 
continuous functions that have the same value as $f$ on an increasingly fine grid. 

\begin{widetext}
\begin{prop} 
As $n \to \infty$, $f_{n}$ converges uniformly to $f$.
\end{prop}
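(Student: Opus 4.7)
The plan is to combine three ingredients already in hand: the uniform continuity of $f(\vect{x})=\sqrt{x_1 x_2}$ on the compact square $[0,1]^2$; the fact that $\ptope_n$ contains the point $(\vect{v},f(\vect{v}))$ for every $\vect{v}\in\mathcal{L}_n$ by construction; and the characterisation $f_n(\vect{x})=\max\{x_3 \mid (\vect{x},x_3)\in\ptope_n\}$ from Proposition~\ref{equaldefns}. Together with the pointwise upper bound $f_n \leq f$ from Proposition~\ref{lbound}, these reduce the task to showing that $f_n$ does not drop too far below $f$ inside a single grid cell.

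Fix $\epsilon>0$. By uniform continuity I choose $\delta>0$ such that $\|\vect{x}-\vect{y}\|_\infty<\delta$ implies $|f(\vect{x})-f(\vect{y})|<\epsilon$, and then pick $N$ so that $1/2^N<\delta$. For any $n\geq N$ and any $\vect{x}\in[0,1]^2$, let $\vect{v}_1,\dots,\vect{v}_4\in\mathcal{L}_n$ be the four corners of the grid cell containing $\vect{x}$, and write $\vect{x}=\sum_k \lambda_k \vect{v}_k$ with $\lambda_k\geq 0$ and $\sum_k \lambda_k = 1$ (for instance using bilinear interpolation weights). Then $\|\vect{v}_k-\vect{x}\|_\infty\leq 1/2^n<\delta$ for each $k$.

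The convexity step is the heart of the argument: since each $(\vect{v}_k, f(\vect{v}_k))$ lies in $\ptope_n$ and $\ptope_n$ is convex, the point $\bigl(\vect{x},\, \sum_k \lambda_k f(\vect{v}_k)\bigr)$ also lies in $\ptope_n$. Proposition~\ref{equaldefns} then gives $f_n(\vect{x}) \geq \sum_k \lambda_k f(\vect{v}_k)$. Uniform continuity yields $f(\vect{v}_k) > f(\vect{x}) - \epsilon$ for each $k$, so $\sum_k \lambda_k f(\vect{v}_k) > f(\vect{x})-\epsilon$, hence $f_n(\vect{x}) > f(\vect{x})-\epsilon$. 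Combined with $f_n(\vect{x})\leq f(\vect{x})$ from Proposition~\ref{lbound}, this yields $|f_n(\vect{x})-f(\vect{x})|<\epsilon$ for all $n\geq N$ and all $\vect{x}\in[0,1]^2$, which is the desired uniform convergence.

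The argument is fairly routine, with no real obstacle; the only mildly delicate point is ensuring the grid-cell decomposition $\vect{x}=\sum_k \lambda_k \vect{v}_k$ is well-defined for every $\vect{x}\in[0,1]^2$ (including grid-line and boundary cases), but any such $\vect{x}$ lies in at least one closed grid cell so the decomposition always exists. Monotonicity (Proposition~\ref{monconvergence}) is not strictly needed for the uniform-convergence conclusion, although it confirms that once $N$ works, the same uniform bound persists for every $n\geq N$.
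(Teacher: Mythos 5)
Your proof is correct, and it takes a genuinely different route from the paper's. The paper's argument works additively with a nearby lattice point $\vect{y}\in\mathcal{L}_{n}$: it invokes uniform continuity of \emph{both} $f$ and $f_{n}$, the interpolation property $f_{n}(\vect{y})=f(\vect{y})$ from Proposition~\ref{lbound}, and the monotonicity of Proposition~\ref{monconvergence} to reduce $n'$ to $n$, then bounds $|f_{n}(\vect{x})-f(\vect{x})|\leq|f_{n}(\vect{x})-f_{n}(\vect{y})|+|f(\vect{y})-f(\vect{x})|$. You instead lower-bound $f_{n}(\vect{x})$ directly by the bilinear combination $\sum_k\lambda_k f(\vect{v}_k)$ of the cell-corner values, using only the convexity of $\ptope_n$ and the characterisation of $f_{n}$ as $\max\{x_3\mid(\vect{x},x_3)\in\ptope_n\}$ from Proposition~\ref{equaldefns}. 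Your version buys two things: it needs uniform continuity of $f$ alone, sidestepping the paper's appeal to continuity of the $f_{n}$ themselves (where the modulus $\delta_2$ in the paper's condition~\eqref{prop103} a priori depends on $n$, which is chosen only afterwards --- a point the paper does not address and which your argument renders moot); and, as you observe, it dispenses with Proposition~\ref{monconvergence}, since your bound $f_{n}(\vect{x})>f(\vect{x})-\epsilon$ holds for every $n\geq N$ independently. The paper's approach, in turn, generalises more readily to settings where one only knows the values of $f$ on the lattice and some abstract equicontinuity, but for the concrete construction at hand your argument is the cleaner one.
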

\begin{proof}
We note that three conditions must hold:
\begin{align}
\forall \delta &> 0 \ \ \exists n \in \mathbb{N}: \ \ \forall \vect{x} \in \left[ 0,1 \right]^2 \ \ \exists \vect{y} \in \mathcal{L}_{n} \ \ \text{s.t.}  \  \ |\vect{x} -\vect{y} | < \delta \label{prop101} \\
 \forall \epsilon_1 &> 0 \ \ \exists \delta_1 >0: \ \ \forall \vect{x},  \vect{y},\in \left[ 0,1 \right]^2\ \ \text{if} \ \  |\vect{x} -\vect{y} | < \delta_1\ \ \text{then} \ \ |f(\vect{x}) -f(\vect{y} )| < \epsilon_1 \label{prop102}\\
\forall \epsilon_2 &> 0 \ \ \exists \delta_2 >0: \ \ \forall \vect{x},  \vect{y},\in \left[ 0,1 \right]^2\ \ \text{if} \ \  |\vect{x} -\vect{y} | < \delta_2\ \ \text{then} \ \ |f_{n}(\vect{x}) -f_{n}(\vect{y} )| < \epsilon_2 \label{prop103}
\end{align}
The first statement just says that one can create an arbitrarily fine grid $\mathcal{L}_{n}$.
The other two statements follow from the fact that continuous functions on compact sets are automatically uniformly continuous.  

To show uniform convergence, we would need to prove that
\begin{gather}
\forall \epsilon > 0 \ \  \exists n \in \mathbb{N}: \ \forall n' \geq n \ \  \forall \vect{x} \in \left[ 0,1 \right]^2 \ , \qquad 
|f_{n^\prime}(\vect{x})-f(\vect{x})| < \epsilon \ .
\end{gather}
To prove this, consider any $\epsilon > 0$, and choose $n$ by the following procedure: set $\epsilon_1 < \epsilon/2$ and $\epsilon_2 < \epsilon/2$, and take some corresponding $\delta_1,\delta_2$ according to~\eqref{prop102}--\eqref{prop103}. Then set $\delta < \min(\delta_1,\delta_2)$, and take a corresponding $n \in \mathbb{N}$ according to~\eqref{prop101}. This choice of $n$ has the desired property: for all $n' \geq n$ and for all $\vect{x} \in \left[ 0,1 \right]^2$,~\eqref{prop101} ensures that there exists some $\vect{y} \in \mathcal{L}_{n}$ satisfying $|\vect{x} -\vect{y} | < \delta$, hence
\begin{align}
|f_{n^\prime}(\vect{x})-f(\vect{x})| &\leq |f_{n}(\vect{x})-f(\vect{x})| \\
&\leq |f_{n}(\vect{x})-f(\vect{y})| + |f(\vect{y})-f(\vect{x})| \\
&= |f_{n}(\vect{x})-f_{n}(\vect{y})| + |f(\vect{y})-f(\vect{x})| \\
&< \epsilon_2 + \epsilon_1 < \epsilon \ ,
\end{align}
where the first line follows from Propositions~\ref{lbound} and~\ref{monconvergence}, while the third line follows from Proposition~\ref{lbound}.
\end{proof}
\end{widetext}

\subsection{Creating An SDP Algorithm That Minimizes The Fidelity} \label{sec:methodsSDPalg}
Without loss of generality, we ignore the factor $1/\operatorname{Pr}(00)$ in the fidelity expression, as it is a positive constant (for a given distribution $\operatorname{Pr}(ab|xy)$). 
We consider the functions $f$ and $f_{n}$ defined in the previous section, but as previously discussed, we now view them as functions of $\vect{p}^i$ (though with dependence only on the $\operatorname{Pr}(00|i),\operatorname{Pr}(11|i)$ terms), i.e.~so we have $f(\vect{p}^i)=\sqrt{ \operatorname{Pr}(00|i) \operatorname{Pr}(11|i) }$ and analogously for $f_{n}$. Since $f_{n}$ is a lower bound on $f$, our optimization problem (after dropping the $1/\operatorname{Pr}(00)$ factor) is clearly lower bounded by the following:
\begin{equation} \label{eq:approxfidopt1}
\begin{aligned}
\inf_{\operatorname{Pr}(i),\vect{p^{i}}} \quad & \sum_{i} f_{n}(\vect{p}^i) \operatorname{Pr}(i) \\
\textrm{s.t.} \quad &\sum_{i} \operatorname{Pr}(i) \vect{p}^{{i}} =\vect{p} \\
\quad &\vect{p}^{i} \in  \mathcal{Q}_{\mathcal{X},\mathcal{Y}} \\
\quad &\operatorname{Pr}(i) \in   \mathcal{P(\mathcal{I})} 
\end{aligned}
\end{equation}

We can show that this lower bound converges uniformly to the original problem~\eqref{eq:exactfidopt2} as $n \to \infty$, using our previous results about convergence of $f_{n}$:
\begin{prop} 
As $n \to \infty$, the optimal value of \eqref{eq:approxfidopt1} converges uniformly to that of \eqref{eq:exactfidopt2} (rescaled by the constant factor of $1/\operatorname{Pr}(00)$).
\end{prop}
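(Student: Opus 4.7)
The plan is to exploit the uniform convergence $f_{n}\to f$ established in the preceding proposition, combined with the observation that \eqref{eq:approxfidopt1} and \eqref{eq:exactfidopt2} share the same feasible set and differ only in their objective integrands. Let $v_{n}$ and $v$ denote the optimal values of \eqref{eq:approxfidopt1} and \eqref{eq:exactfidopt2} respectively (after the $1/\operatorname{Pr}(00)$ rescaling).

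First I would invoke Proposition~\ref{lbound} to get $f_{n}(\vect{p}^{i})\le f(\vect{p}^{i})$ pointwise, which implies that at every feasible point the objective of \eqref{eq:approxfidopt1} does not exceed that of \eqref{eq:exactfidopt2}, and hence $v_{n}\le v$ for all $n$. Proposition~\ref{monconvergence} additionally gives that the sequence $v_{n}$ is monotonically nondecreasing, which is not strictly needed for the convergence claim but is a useful consistency check.

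Next, for the matching upper bound on $v-v_{n}$, I would fix $\epsilon>0$ and use the uniform convergence of $f_{n}$ on $[0,1]^{2}$ to choose $n$ large enough that $\sup_{\vect{x}\in[0,1]^{2}}|f_{n}(\vect{x})-f(\vect{x})|<\epsilon$. Since the relevant pair $(\operatorname{Pr}(00|i),\operatorname{Pr}(11|i))$ always lies in $[0,1]^{2}$, for every feasible $(\operatorname{Pr}(i),\vect{p}^{i})$ one has
\begin{equation}
\sum_{i}\bigl(f(\vect{p}^{i})-f_{n}(\vect{p}^{i})\bigr)\operatorname{Pr}(i) \le \epsilon \sum_{i}\operatorname{Pr}(i)=\epsilon.
\end{equation}
Since this holds uniformly on the common feasible set, it transfers directly to the infima, giving $v-v_{n}\le\epsilon$. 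Combined with the earlier $v_{n}\le v$, this yields $|v-v_{n}|\le\epsilon$ for all sufficiently large $n$. The ``uniformity'' of the convergence refers to the fact that the resulting bound depends only on $n$, and not on the observed distribution $\vect{p}$ that enters the constraints; so the same $n$ works for every problem instance.

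I do not anticipate a significant obstacle. The only point requiring modest care is verifying that the pointwise bound $|f_{n}-f|<\epsilon$ lifts to a uniform bound on the convex combination $\sum_{i}f_{n}(\vect{p}^{i})\operatorname{Pr}(i)$ even when the index set $\mathcal{I}$ is not a priori bounded in cardinality; this is handled by the simple estimate above, using only that $\{\operatorname{Pr}(i)\}$ is a probability distribution, so the argument goes through irrespective of the size of $\mathcal{I}$.
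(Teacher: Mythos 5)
Your proposal is correct and follows essentially the same route as the paper: both arguments combine the one-sided bound $v_n \le v$ (from $f_n \le f$) with the uniform convergence $f_n \to f$ on $[0,1]^2$, using the fact that the objective is a convex combination so the pointwise error $\epsilon$ is not amplified, and both note the resulting $n$ is independent of $\vect{p}$. The only cosmetic difference is that the paper routes the "transfers to the infima" step through an explicit $\epsilon_1$-almost-minimizer of the approximate problem, whereas you state the transfer directly; these are the same argument.
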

\begin{proof}
We denote the solutions to the original and approximate optimization problems by $f^\star(\vect{p})$ and $f_{n}^{\star}(\vect{p})$, respectively. For all $\vect{p} \in  \mathcal{Q}_{\mathcal{X},\mathcal{Y}}$, for all $n \in \mathbb{N}$, and for all $\epsilon_1>0$, there exists a probability distribution $\operatorname{Pr}(i)$ and a set of quantum realizable probability distributions $\vect{p}^i$ that satisfy the optimization constraints, such that
\begin{equation}
\left| \sum_{i} f_{n}(\vect{p}^i) \operatorname{Pr}(i) - f_{n}^{\star}(\vect{p}) \right| < \epsilon_1 \ .
\end{equation}
As $f_{n}(\vect{p}^i)$ converges uniformly to $f(\vect{p}^i)$, for all $\epsilon_2>0$, there exists an $n$ such that for all $n' \geq n$ and all $\vect{p}^i \in  \mathcal{Q}_{\mathcal{X},\mathcal{Y}}$, 

\begin{equation}
\left| \sum_{i} f(\vect{p}^i) \operatorname{Pr}(i)-\sum_{i} f_{n'}(\vect{p}^i) \operatorname{Pr}(i) \right| < \epsilon_2
\end{equation}
If one chooses $\epsilon_1 +\epsilon_2< \epsilon$, then
\begin{gather}
 \forall \epsilon > 0 \ \  \exists n \in \mathbb{N}: \ \forall n' \geq n \ \  \forall \vect{p} \in \mathcal{Q}_{\mathcal{X},\mathcal{Y}} \ , \nonumber\\
\begin{align}
&\left| f^\star(\vect{p})-f_{n'}^{\star}(\vect{p}) \right| \\
\leq& \left|  \sum_{i} f(\vect{p}^i) \operatorname{Pr}(i)-f_{n'}^{\star}(\vect{p}) \right| \\
\leq& \left| \sum_{i} f(\vect{p}^i) \operatorname{Pr}(i)-\sum_{i} f_{n'}(\vect{p}^i) \operatorname{Pr}(i) \right| \nonumber\\
&\quad + \left| \sum_{i} f_{n'}(\vect{p}^i) \operatorname{Pr}(i)-f_{n'}^{\star}(\vect{p}) \right| \\
<& \epsilon_2 + \epsilon_1 < \epsilon \ ,
\end{align}
\end{gather}
as desired. 
\end{proof}

We now describe how to bound~\eqref{eq:approxfidopt1} using SDPs.
\begin{prop} 
\eqref{eq:approxfidopt1} can be solved via an SDP hierarchy.
\end{prop}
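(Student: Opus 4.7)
The plan is to reformulate~\eqref{eq:approxfidopt1} by exploiting the piecewise-affine structure of $f_{n}(\vect{p}^i) = \min_{j \in \mathcal{S}} h_j(\vect{p}^i)$. Following the reduction of~\cite{himbeeck2019correlations}, the central idea is to ``split'' Eve's outcome $i$ into pairs $(i,j) \in \mathcal{I}\times\mathcal{S}$, interpreting $j$ as an auxiliary classical register in which Eve commits to which facet of $\ptope_n$ is active for outcome $i$. Since $|\mathcal{S}|$ is finite and $\mathcal{I}$ is unconstrained in size, the resulting problem can be indexed by $j \in \mathcal{S}$ alone:
\begin{equation} \label{eq:splitprob}
\begin{aligned}
\inf_{q_j, \vect{p}^j} \quad & \sum_{j \in \mathcal{S}} q_j \, h_j(\vect{p}^j) \\
\textrm{s.t.} \quad & \sum_j q_j \vect{p}^j = \vect{p}, \\
\quad & \vect{p}^j \in \mathcal{Q}_{\mathcal{X},\mathcal{Y}}, \ (q_j)_j \in \mathcal{P}(\mathcal{S}).
\end{aligned}
\end{equation}
The first step is to prove that~\eqref{eq:splitprob} has the same optimal value as~\eqref{eq:approxfidopt1}. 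In one direction, given any feasible point of~\eqref{eq:approxfidopt1}, group the outcomes $i$ by $j^\star(i) \coloneqq \arg\min_j h_j(\vect{p}^i)$ and set $q_j \coloneqq \sum_{i : j^\star(i)=j} \operatorname{Pr}(i)$ and $\vect{p}^j \coloneqq q_j^{-1}\sum_{i : j^\star(i)=j} \operatorname{Pr}(i) \vect{p}^i$; affineness of each $h_j$ together with convexity of $\mathcal{Q}_{\mathcal{X},\mathcal{Y}}$ then shows the new point is feasible for~\eqref{eq:splitprob} with the same objective value. Conversely, any feasible point of~\eqref{eq:splitprob} is also feasible for~\eqref{eq:approxfidopt1} (with $\mathcal{I}=\mathcal{S}$), and its objective there, $\sum_j q_j f_n(\vect{p}^j)$, is $\leq \sum_j q_j h_j(\vect{p}^j)$ since $f_n \leq h_j$ pointwise.

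Next, I would linearize~\eqref{eq:splitprob} by passing to unnormalized variables $\tilde{\vect{p}}^j \coloneqq q_j \vect{p}^j$. Because each $h_j$ is affine, $q_j h_j(\vect{p}^j)$ becomes a linear function of $(\tilde{\vect{p}}^j, q_j)$, and the marginal constraint becomes the linear equation $\sum_j \tilde{\vect{p}}^j = \vect{p}$. The only remaining non-linear constraint is $\vect{p}^j \in \mathcal{Q}_{\mathcal{X},\mathcal{Y}}$, i.e.\ $\tilde{\vect{p}}^j \in q_j\,\mathcal{Q}_{\mathcal{X},\mathcal{Y}}$, which I would impose via the NPA hierarchy~\cite{navascues2008convergent,bancal2013randomness,Nieto_Silleras_2014}. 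For each $j$, introduce a PSD moment matrix $\Gamma^j \succeq 0$ at some NPA level $\ell$, with its top-left entry equal to $q_j$ and its ``characteristic'' entries equal to $\tilde{\vect{p}}^j$, subject to the standard NPA linear relations. Because those relations are homogeneous, this condition captures precisely the sub-normalized cone generated by the $\ell$-th NPA relaxation of $\mathcal{Q}_{\mathcal{X},\mathcal{Y}}$. Together with the linear objective and linear marginal constraint, this yields a bona fide SDP at each level $\ell$.

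Monotonicity of the NPA hierarchy then gives a non-decreasing sequence of SDP optima, each of which lower-bounds the optimum of~\eqref{eq:splitprob} (and hence of~\eqref{eq:approxfidopt1}), and convergence of the hierarchy ensures this sequence approaches the true value as $\ell \to \infty$. Since SDPs automatically yield certified lower bounds via weak duality, this delivers the promised certified lower bound on the fidelity. The main subtle points to nail down will be (i) justifying the equivalence of~\eqref{eq:approxfidopt1} and~\eqref{eq:splitprob} rigorously when $\mathcal{I}$ may be uncountable or when $\arg\min_j h_j(\vect{p}^i)$ is non-unique (both handled by noting that ties can be broken arbitrarily and that $\mathcal{Q}_{\mathcal{X},\mathcal{Y}}$ is convex), and (ii) correctly configuring the unnormalized NPA moment matrices so that $(q_j, \tilde{\vect{p}}^j)$ is constrained precisely to the sub-normalized cone at level $\ell$. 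Everything else is routine SDP modelling.
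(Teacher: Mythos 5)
Your proposal is correct and follows essentially the same route as the paper's proof: partition Eve's outcomes according to which affine piece $h_j$ attains the minimum, aggregate via convexity of $\mathcal{Q}_{\mathcal{X},\mathcal{Y}}$ and affineness of $h_j$ to get a finitely-indexed problem with the same optimal value, then pass to subnormalized variables $q_j \vect{p}^j$ and impose the quantum constraint via the NPA hierarchy. The only cosmetic difference is that the paper eliminates the weights $\operatorname{Pr}(i)$ by expressing them as a linear functional of the subnormalized distributions, whereas you keep $q_j$ as the top-left moment-matrix entry; these are equivalent.
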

\begin{proof}
To reduce the sum to a bounded number of terms, we adopt the approach from \cite{bancal2013randomness,Nieto_Silleras_2014}. Specifically, consider any feasible point of the optimization~\eqref{eq:approxfidopt1}, i.e.~some feasible values for $\vect{p}^{i}$ and $\operatorname{Pr}(i)$. 
Following the previous section, let $\mathcal{S}$ denote the indices of the affine functions used to define $f_{n}$.

Partition the summation domain of $i$ into subsets $\{\mathcal{R}_j\}_{j\in\mathcal{S}}$, such that $i\in \mathcal{R}_j$ implies the minimum in the definition~\eqref{eq:fndefmin} of $f_{n}(\vect{p}^i)$ is attained by the index $j$. In other words, $i\in \mathcal{R}_j$ implies $f_{n}(\vect{p}^i) = h_j(\vect{p}^i)$. (Geometrically speaking, we are partitioning the terms based on which facet of $\ptope_n$ they lie on.) Let us define
\begin{equation}
\tilde{\operatorname{P}}_j = \sum_{i\in\mathcal{R}_j} \operatorname{Pr}(i) \ , \quad 
\tilde{\vect{p}}^j = \sum_{i\in\mathcal{R}_j} \vect{p}^i \frac{\operatorname{Pr}(i) }{\tilde{\operatorname{P}}_j} \label{eq:newQfeas} \ . 
\end{equation}
Note that the choice of partition may not be unique, e.g.~if the feasible point being considered has a $\vect{p}^{i}$ term where the minimum in the definition~\eqref{eq:fndefmin} is attained by more than one $j\in\mathcal{S}$. However, this nonuniqueness is not a problem; any partition with the specified property suffices. Also, some $\mathcal{R}_j$ may be empty, but this is not a problem either; one should simply select an arbitrary distribution $\tilde{\vect{p}}^j \in \mathcal{Q}_{\mathcal{X},\mathcal{Y}}$ instead of using~\eqref{eq:newQfeas} (since the denominator is zero if $\mathcal{R}_j$ is empty).

Then we can rewrite
\begin{align}
\sum_{i\in\mathcal{R}_j} f_{n}(\vect{p}^i) \operatorname{Pr}(i)
&= \sum_{i\in\mathcal{R}_j} h_j(\vect{p}^i) \operatorname{Pr}(i) \\
&= \left(\sum_{i\in\mathcal{R}_j} h_j(\vect{p}^i) \frac{\operatorname{Pr}(i)}{\tilde{\operatorname{P}}_j}\right) \tilde{\operatorname{P}}_j \\
&=h_j(\tilde{\vect{p}}^j) \tilde{\operatorname{P}}_j \label{eq:newobjval} \ ,
\end{align}
where in the third line we used the fact that $h_j$ is affine and $\{{\operatorname{Pr}(i)}/{\tilde{\operatorname{P}}_j}\}_i$ forms a normalized probability distribution over $i \in \mathcal{R}_j$.

Observe that $\tilde{\vect{p}}^j \in \mathcal{Q}_{\mathcal{X},\mathcal{Y}}$ (by convexity of $\mathcal{Q}_{\mathcal{X},\mathcal{Y}}$), and that $\tilde{\operatorname{P}}_j$ is a valid probability distribution (over $j\in\mathcal{S}$) since the sets $\mathcal{R}_j$ partition the sum over $i$. Together with the expression~\eqref{eq:newobjval}, this implies that if we replace the original objective function with 
\begin{equation}  \label{eq:newobjfct}
\sum_{i\in\mathcal{S}} h_i(\vect{p}^i) \operatorname{Pr}(i) \ ,
\end{equation}
the value of the optimization will not increase, since every feasible point of the original optimization yields another feasible point with the same objective value but in the form~\eqref{eq:newobjfct}. In other words, by rewriting the objective function in the form~\eqref{eq:newobjfct}, we have essentially taken $\vect{p}^i$ and $ \operatorname{Pr}(i)$ to be the $\tilde{\vect{p}}^j$ and $\tilde{\operatorname{P}}_j$ we constructed above. Furthermore, since we constructed $f_{n}$ via~\eqref{eq:fndefmin},
\begin{equation}
h_i(\vect{p}^i) \geq f_{n}(\vect{p}^i) 
\end{equation}
holds for all $\vect{p}^{i} \in  \mathcal{Q}_{\mathcal{X},\mathcal{Y}}$. Thus \eqref{eq:newobjfct} is a natural upper bound on our previous objective function in \eqref{eq:approxfidopt1}, so replacing the latter with the former will not decrease the optimal value either. 
In summary, we can replace the objective function with \eqref{eq:newobjfct} without changing the optimal value, which is useful because it is the sum of a (known) finite number of terms; also, we no longer need to address the minimization in the definition of $f_{n}$.

However, since $h_i(\vect{p}^{i}) \operatorname{Pr}(i)$ is a product of affine functions of the optimization variables, \eqref{eq:newobjfct} is still not an affine function. To deal with this, we consider subnormalized probability distributions, i.e.~the terms in the distribution sum up to a value in $[0,1]$ instead of having to sum to $1$ (here we mean summing over $(a,b)$ for each choice of $(x,y)$; also, we impose that all $(x,y)$ have the same normalization factor). In our case, we scale the probability distributions $\vect{p}^i$ by the scaling factor $\operatorname{Pr}(i)$, i.e.~we define new variables $\subn{p}^i = \operatorname{Pr}(i)\vect{p}^i$. 
To verify that the objective function is affine in these new variables, we show that each term in the summation is affine. Writing $h_i(\vect{p}^i)$ in the form $a_i + \vect{a}_i \cdot \vect{p}^i$ for some scalar $a_i$ and vector $\vect{a}_i$, we have
\begin{align}
&h_i(\vect{p}^i) \operatorname{Pr}(i) \\
=& a_i \operatorname{Pr}(i) + (\vect{a}_i \cdot \vect{p}^i) \operatorname{Pr}(i) \\
=& a_i \left(\sum_{ab} \operatorname{Pr}(ab|00i)\operatorname{Pr}(i)\right) + \vect{a}_i \cdot \subn{p}^i \label{eq:sumexpanded} \\
=& a_i \vect{z} \cdot \subn{p}^i + \vect{a}_i \cdot \subn{p}^i \label{eq:sumcollapsed} \ ,
\end{align}
where $\vect{z}$ is a vector, which contains only zeroes and ones, that specifies the terms summed over in~\eqref{eq:sumexpanded} (this is possible since each $\operatorname{Pr}(ab|00i)\operatorname{Pr}(i)$ term is equal to an element of $\subn{p}^i$; also, note that the choice to use the input pair $xy=00$ at that step is arbitrary and any other pair would suffice).
This is indeed affine (in fact linear) in $\subn{p}^i$.
As for the constraints, observe that the first constraint is linear in $\subn{p}^i$. Also, as long as $\vect{p}$ is normalized, we can replace the second and third constraints by a single constraint $\subn{p}^i \in \breve{\mathcal{Q}}_{\mathcal{X},\mathcal{Y}}$, where $\breve{\mathcal{Q}}_{\mathcal{X},\mathcal{Y}}$ denotes subnormalized distributions compatible with quantum theory (and with a common normalization factor for all input pairs $(x,y)$). This is because when $\vect{p}$ is normalized, the first constraint implicitly imposes a normalization condition on the variables $\subn{p}^i$ that subsumes the original third constraint.

In summary, the optimal value of \eqref{eq:approxfidopt1} is the same as
\begin{equation} \label{eq:approxfidoptSDP}
\begin{aligned}
\inf_{\subn{p}^{i}} \quad & \sum_{i\in\mathcal{S}} 
a_i \vect{z} \cdot \subn{p}^i + \vect{a}_i \cdot \subn{p}^i
\\
\textrm{s.t.} \quad &\sum_{i} \subn{p}^i =\vect{p} \\
\quad &\subn{p}^i \in  \breve{\mathcal{Q}}_{\mathcal{X},\mathcal{Y}} 
\end{aligned}
\end{equation}
where $a_i, \vect{a}_i, \vect{z}$ are the values described above regarding~\eqref{eq:sumcollapsed}.
Finally, we use the fact that there exists a SDP hierarchy for the verification of subnormalized quantum probability distributions $\breve{\mathcal{Q}}_{\mathcal{X},\mathcal{Y}}$  \cite{navascues2008convergent,bancal2013randomness,Nieto_Silleras_2014}. Hence, the entire constrained optimization can be lower-bounded by using this hierarchy of SDP relaxations to impose the $\breve{\mathcal{Q}}_{\mathcal{X},\mathcal{Y}}$ constraint, yielding a sequence of increasingly tight lower bounds on the optimization.
Note that each level of the hierarchy yields a certified \emph{lower} bound, i.e.~our results are never an over-estimate of the true minimum of the optimization.
\end{proof}

We close this section with some implementation remarks. For Fig.~\ref{fig:4_4}, we used a $4\times4$ lattice to construct the bound $f_{n}$, and NPA level $2$. For Figs.~\ref{fig:3_2}--\ref{fig:2_2}, we used an $8\times8$ lattice and NPA level $3$ (for the latter case, we found that NPA levels $2$ and $4$ also gave basically the same results). The SDP runtime was not too long in all cases, ranging from a few seconds to under 15 minutes (for each data point on the graphs), depending on the size of the scenarios.

In principle, there are two ways in which our bounds might not be tight: first, we have replaced $f$ with $f_{n}$; second, the SDP hierarchy of~\cite{navascues2008convergent} may not have converged to a sufficiently tight bound. We consider the latter to be less of an issue, because this hierarchy typically performs well in situations with few inputs and outputs (for instance in Fig.~\ref{fig:2_2}, which was the main example supporting our reasoning that Theorem~\ref{Theoremsufffid1} may not be a necessary condition). 
As for the former, we performed some checks by noting that every feasible point of the optimization we solve (namely,~\eqref{eq:approxfidoptSDP} with the constraint $\subn{p}^i \in  \breve{\mathcal{Q}}_{\mathcal{X},\mathcal{Y}} $ relaxed to the SDP hierarchy) gives us a feasible point of the original optimization~\eqref{eq:exactfidopt2} (albeit with the constraint $\vect{p}^{i} \in  \mathcal{Q}_{\mathcal{X},\mathcal{Y}}$ relaxed to the SDP hierarchy). We found that for points near the thresholds shown in Figs.~\ref{fig:3_2}--\ref{fig:2_2}, the corresponding feasible values in that original optimization were within $0.0003$ of the lower bounds we obtained, indicating that the bounds are almost tight. (For Fig.~\ref{fig:4_4} we found a bigger gap of about $0.03$ using a $6\times6$ lattice to find feasible points, but this is also not too large.)

Note that by applying Carath\'{e}odory's theorem for convex hulls, we can argue the minimum value in the optimization~\eqref{eq:approxfidopt1} can always be attained by a distribution $\operatorname{Pr}(i)$ with at most $d+2$ nonzero terms, where $d$ is the dimension of $\vect{p}$. This eventually implies that the minimum value in our final optimization~\eqref{eq:approxfidoptSDP} can be attained with at most $d+2$ of the subnormalized distributions $\subn{p}^i$ being nonzero. In practice, $d+2$ is often smaller than the number of affine bounds $h_i$ (i.e.~$|\mathcal{S}|$).
Hence if the optimization~\eqref{eq:approxfidoptSDP} is too large to solve directly, an alternative approach in principle is to run it for every subset of $\mathcal{S}$ with size $d+2$, then take the smallest of the resulting values. This reduces the size of each individual optimization, but comes at the cost of having to run many more of them. 

As another point regarding efficiency, note that our construction of $f_{n}$ involves a transformation to the facet representation of $\ptope_n$. While this can be quickly implemented when $\vect{x}$ is 2-dimensional, the transformation may be computationally demanding in high dimensions~\cite{henk1995basic}. It would be interesting to know whether there are more efficiently computable alternatives. 

A natural attempt would be to partition the domain of $f$ into triangles (more generally, simplices) and construct an affine lower bound in each triangle, yielding a piecewise affine lower bound on $f$. One benefit of this approach is that it may be usable (though not necessarily straightforward) in some cases when $f$ is not concave, whereas our current construction of $f_{n}$ relies heavily on concavity of $f$. However, it runs into the subtle issue that having $f_{n}$ be a pointwise minimum of affine functions is a stronger condition than simply requiring it to be piecewise affine; in particular, our analysis used the structure in the former. (Note that it is not useful to simply take the pointwise minimum of the affine bounds constructed this way --- there can be very large gaps between $f$ and the resulting bound.)

Still, it may be possible to adapt our analysis to this case. To sketch a rough outline, we would again aim to partition the sum in~\eqref{eq:approxfidopt1} into finitely many subsets, but this time by which of the triangles each $\vect{p}^i$ lies in. Transforming to a new feasible point as in~\eqref{eq:newQfeas} (here we would need to use the convexity of the triangles to argue that each $\tilde{\vect{p}}^j$ still lies within its defining triangle), we should be able to perform a similar analysis to reduce the objective to the form~\eqref{eq:newobjfct}, but with the summation index ranging over the triangles in the domain instead. However, to proceed further we would need to constrain each $\tilde{\vect{p}}^j$ to remain within the corresponding triangle, appearing as additional constraints in~\eqref{eq:approxfidopt1}. Since these constraints can be imposed as affine constraints, the final result should still be solvable using the SDP hierarchy. 

Finally, we note that in~\cite{himbeeck2019correlations}, the authors do not convert their optimization to the form~\eqref{eq:approxfidoptSDP}, but rather to a dual form via a somewhat different argument. A similar argument is possible in principle here (see \cite{HahnThesis}), but we choose to present our result in the form~\eqref{eq:approxfidoptSDP} since it seems most straightforward for implementation. 
We thank the authors of~\cite{himbeeck2019correlations} for clarifications on these different approaches. 

\subsection{Proof of Proposition~\ref{prop:neccondprop1}}
\label{sec:necproof}
\begin{proof}
After Alice and Bob conduct $n$ key-generating measurements, the resulting classical-classical-quantum tripartite state is of the form
\begin{equation}
\sum_{\mathbf{a},\mathbf{b} \in \{0,1\}^n} \operatorname{Pr}\left(\mathbf{a}\mathbf{b}\right)  \ketbra{\mathbf{a}\mathbf{b}}{\mathbf{a}\mathbf{b}} \otimes \rho_{\mathbf{E}|\mathbf{a}\mathbf{b}}  \ .
\end{equation}
Considering that Alice and Bob only take accepted blocks into account, i.e.~$D=1$, and Alice sends the message $\mathbf{M}=\mathbf{m}$, it is simple to construct the bipartite state $\rho_{C\mathbf{E}|\mathbf{M}=\mathbf{m} \land D=1 }$, which denotes the state that describes both the value of the bit $C$ and Eve's corresponding side-information. As $D=1$ implies that Alice's and Bob's measurement devices either output $\mathbf{m}$ or $\mathbf{\overline{m}}=\mathbf{m} \oplus \mathbf{1}$, the resulting state is given by
\begin{equation}
\rho_{C\mathbf{E}|\mathbf{M}=\mathbf{m} \land D=1 } = \sum_{c\in \{0,1\}} \frac{1}{2}\ketbra{c}{c} \otimes \omega_c \ ,
\end{equation}
where 
\begin{align} \label{condstate1}
\omega_0 &= \frac{\operatorname{Pr}\left(\mathbf{m}\mathbf{m}\right)\rho_{\mathbf{E}|\mathbf{m}\mathbf{m}} +\operatorname{Pr}\left(\mathbf{m}\mathbf{\overline{m}}\right)\rho_{\mathbf{E}|\mathbf{m}\mathbf{\overline{m}}} }{\operatorname{Pr}\left(\mathbf{m}\mathbf{m}\right)+\operatorname{Pr}\left(\mathbf{m}\mathbf{\overline{m}}\right)} \\
\label{condstate2}\omega_1 &= \frac{\operatorname{Pr}\left(\mathbf{\overline{m}}\mathbf{\overline{m}}\right)\rho_{\mathbf{E}|\mathbf{\overline{m}}\mathbf{\overline{m}}} +\operatorname{Pr}\left(\mathbf{\overline{m}}\mathbf{m}\right)\rho_{\mathbf{E}|\mathbf{\overline{m}}\mathbf{m}} }{\operatorname{Pr}\left(\mathbf{\overline{m}}\mathbf{\overline{m}}\right)+\operatorname{Pr}\left(\mathbf{\overline{m}}\mathbf{m} \right)}
\end{align}
denotes Eve's conditioned side-information \cite{tan2019advantage}. Moreover, after symmetrization, we get $\operatorname{Pr}\left(\mathbf{m}\mathbf{m}\right) = \operatorname{Pr}\left(\mathbf{\overline{m}}\mathbf{\overline{m}}\right)= \frac{\left(1-\epsilon\right)^n}{2^n}$ and $\operatorname{Pr}\left(\mathbf{m}\mathbf{\overline{m}}\right) = \operatorname{Pr}\left(\mathbf{\overline{m}}\mathbf{m}\right)= \frac{\epsilon^n}{2^n}$, which further simplifies \eqref{condstate1} and \eqref{condstate2}.

Eve's ability to correctly guess $C$ therefore depends on the distinguishability of $\omega_0$ and $\omega_1$. As $C$, and consequently $\omega_i$, is distributed uniformly, we may use the operational interpretation of the trace distance to derive Eve's optimal guessing probability. The optimal probability of guessing it incorrectly is thus given by 
\begin{equation} \label{Prob1appendix}
\operatorname{Pr}(C \neq C''|D=1) = 
\frac{1}{2} \left( 1-d \left(\omega_0, \omega_1\right) \right) \ .
\end{equation}
\begin{widetext}
We first consider Bob's guess. As $C \neq C'$ only if Alice measures $\mathbf{m}$ and Bob measures $\mathbf{\overline{m}}$ or vice versa,
\begin{align} \label{Prob2appendix}
\operatorname{Pr}(C \neq C'|D=1) =
\frac{\operatorname{Pr}\left(\mathbf{m}\mathbf{\overline{m}}\right) +\operatorname{Pr}\left(\mathbf{\overline{m}}\mathbf{m}\right)}{\operatorname{Pr}\left(\mathbf{m}\mathbf{m}\right)+\operatorname{Pr}\left(\mathbf{\overline{m}}\mathbf{\overline{m}}\right)+\operatorname{Pr}\left(\mathbf{m}\mathbf{\overline{m}}\right) +\operatorname{Pr}\left(\mathbf{\overline{m}}\mathbf{m}\right)} = \frac{\epsilon^n}{\epsilon^n+\left(1-\epsilon\right)^n} \eqqcolon \delta_n \ .
\end{align}

We now consider Eve's guess. By using the reverse triangle inequality of the 1-norm, we can get a lower bound on $d \left(\omega_0, \omega_1\right)$ in terms of $\delta_n$:
\begin{equation}
d \left(\omega_0, \omega_1\right) \geq \left(1-\delta_n \right) \cdot d \left(\rho_{\mathbf{E}|\mathbf{m}\mathbf{m}}, \rho_{\mathbf{E}|\mathbf{\overline{m}}\mathbf{\overline{m}}} \right) - \delta_n  \cdot d \left(\rho_{\mathbf{E}|\mathbf{m}\mathbf{\overline{m}}}, \rho_{\mathbf{E}|\mathbf{\overline{m}}\mathbf{m}} \right)  \ ,
\end{equation}
and substituting this into \eqref{Prob1appendix} yields
\begin{align}
\operatorname{Pr}(C \neq C''|D=1) &\leq \frac{1}{2} \left( 1-\left(1-\delta_n \right) \cdot d \left(\rho_{\mathbf{E}|\mathbf{m}\mathbf{m}}, \rho_{\mathbf{E}|\mathbf{\overline{m}}\mathbf{\overline{m}}} \right) + \delta_n  \cdot d \left(\rho_{\mathbf{E}|\mathbf{m}\mathbf{\overline{m}}}, \rho_{\mathbf{E}|\mathbf{\overline{m}}\mathbf{m}} \right)  \right) \nonumber\\
&= \frac{1}{2} \left( 1-\left(1-\delta_n \right) \cdot d \left(\rho_{\mathbf{E}|\mathbf{m}\mathbf{m}}, \rho_{\mathbf{E}|\mathbf{\overline{m}}\mathbf{\overline{m}}} \right)  \right) 
\ ,
\end{align}
where to get the second line we used the hypothesis $\rho_{E|01} = \rho_{E|10}$ (which implies $\rho_{\mathbf{E}|\mathbf{m}\mathbf{\overline{m}}} =  \rho_{\mathbf{E}|\mathbf{\overline{m}}\mathbf{m}} $).
The `Fuchs--van de Graaf-type' inequality~\eqref{eq:FvdGlow} then implies that
\begin{align}
\operatorname{Pr}(C \neq C''|D=1) \leq & \frac{1 }{2} \left(  1-\left(1-\delta_n \right) \cdot \left(1-\overlap \left(\rho_{\mathbf{E}|\mathbf{m}\mathbf{m}}, \rho_{\mathbf{E}|\mathbf{\overline{m}}\mathbf{\overline{m}}} \right) \right) \right) 
\ .
\end{align}
Moreover, note that we have 
$\overlap \left(\rho_{\mathbf{E}|\mathbf{m}\mathbf{m}}, \rho_{\mathbf{E}|\mathbf{\overline{m}}\mathbf{\overline{m}}} \right) = \overlap \left(\rho_{E|00}, \rho_{E|11} \right)^n$ (by applying the I.I.D.~assumption together with the multiplicative property~\eqref{eq:mult}, followed by the symmetry property~\eqref{eq:symm}).
Therefore a sufficient condition for \eqref{eq:nprobcond1} to hold is 
\begin{equation} \label{inequality1app}
\frac{1}{2} \left(1-\left(1-\delta_n \right) \cdot \left(1-\overlap \left(\rho_{E|00}, \rho_{E|11} \right)^n \right) \right) \leq \delta_n \ .
\end{equation}
We conclude the proof by showing that for all $n \in \mathbb{N}$, \eqref{eq:nnecfidcond2} is equivalent to \eqref{inequality1app}. Note that the inequality~\eqref{eq:FvdGlow} (together with the fact that $d(\rho,\sigma)\leq 1$) implies that $\overlap(\rho,\sigma)$ is always non-negative. Hence for all $n \in \mathbb{N}$ the inequality \eqref{eq:nnecfidcond2} is equivalent to 
\begin{equation} \label{nConjectureexp}
\overlap(\rho_{E|00}, \rho_{E|11})^n \leq \frac{\epsilon^n}{\left(1-\epsilon\right)^n} = \frac{\left(1-\epsilon\right)^n - \left(1-\epsilon\right)^n + \epsilon^n}{\left(1-\epsilon\right)^n} = 1 - \frac{\left(1-\epsilon\right)^n-\epsilon^n}{\left(1-\epsilon\right)^n}
\end{equation}
This inequality can be rewritten as
\begin{equation} \label{inequality2app}
1 - \overlap(\rho_{E|00}, \rho_{E|11})^n \geq \frac{\left(1-\epsilon\right)^n-\epsilon^n}{\left(1-\epsilon\right)^n} = \frac{1-2\delta_n}{1-\delta_n} \ .
\end{equation}
The previous inequality is equivalent to 
\begin{equation} \label{semifinalinequality}
1 -\left(1-\delta_n \right) \cdot \left(1-\overlap \left(\rho_{E|00}, \rho_{E|11} \right)^n \right) \leq 2 \delta_n 
\end{equation}
and dividing both sides by $2$ gives \eqref{inequality1app}.

\end{widetext}
\end{proof}

We note that the current gap between the sufficient and necessary conditions can be viewed as arising from the `Fuchs--van de Graaf-type' inequalities~\eqref{eq:FvdGlow} and \eqref{eq:FvdGupp}. This is because the sufficient condition (Theorem~\ref{Theoremsufffid1}) proof requires lower bounds on $\operatorname{H}(C|\mathbf{E} \mathbf{M}; D=1)$, whereas the conjectured necessary condition needs upper bounds. 
As noted in the supplemental material for~\cite{tan2019advantage}, the analysis we performed above also serves an alternative approach for proving Theorem~\ref{Theoremsufffid1} (the main proof in~\cite{tan2019advantage} instead used the inequality from~\cite{PhysRevLett.105.040505} to lower-bound ${\operatorname{H}(C|\mathbf{E} \mathbf{M}; D=1)}$). The main idea is that ${\operatorname{H}(C|\mathbf{E} \mathbf{M}; D=1)}$ can be lower bounded by the min-entropy, which simply equals ${-\log({(1-d \left(\omega_0, \omega_1\right))}/{2})}$. By performing an analysis similar to the above but using the inequality~\eqref{eq:FvdGupp} instead of~\eqref{eq:FvdGlow}, we end up (after some asymptotic analysis) with a sufficient condition for~\eqref{eq:entropycond1} to hold, which turns out to be exactly the same as Theorem~\ref{Theoremsufffid1} (except that since the only properties of $\overlap$ required for this argument are~\eqref{eq:symm}--\eqref{eq:FvdGupp}, it would hold with any $\overlap$ satisfying those properties in place of the fidelity $\operatorname{F}$). {Note that no assumption is needed on $d(\rho_{E|01}, \rho_{E|10})$ for this direction of the proof.} From this perspective, it appears that the main contribution to the gap is the difference between the bounds~\eqref{eq:FvdGlow} and~\eqref{eq:FvdGupp}, since other steps of the proof have comparatively small effects asymptotically.

However, regarding possible choices of distinguishability measure $\overlap$ in this generalized version of Theorem~\ref{Theoremsufffid1}, note that replacing the fidelity in the theorem statement with the pretty-good fidelity yields a worse result, due to the inequality $\operatorname{F}(\rho, \sigma) \geq \operatorname{F}_\mathrm{pg}(\rho, \sigma)$. (The opposite was true for the necessary condition, Proposition~\ref{prop:neccondprop1}.) The question remains of whether there are choices for the measure $\overlap$ that yield better bounds for the sufficient condition.

Finally, we remark that the above analysis essentially centers around distinguishing
$\rho_{\vect{E}|\vect{m}\vect{m}}$ and $\rho_{\vect{E}|\overline{\vect{m}}\overline{\vect{m}}}$.
Returning to our discussion of the quantum Chernoff bound, we observe that unless $\vect{M}= \vect{0}$ or $\vect{M}= \vect{1}$, these states are not of the form $\rho^{\otimes n}$ and $\sigma^{\otimes n}$ studied in the quantum Chernoff bound, though there are some structural similarities. {(If $\vect{M}$ were restrained to $\vect{M}= \vect{0}$ or $\vect{M}= \vect{1}$, one may consider only $\vect{M}= \vect{0}$, as $\vect{M}= \vect{1}$ can be thought of as a relabeling of measurement outcomes.)}
Since $\vect{M}$ is not restricted to these cases in general, we would need to study whether these other structurally similar states could still be analyzed using the proof techniques for the quantum Chernoff bound. 

As another perspective, note that the quantum Chernoff bound in fact satisfies almost all the properties~\eqref{eq:symm}--\eqref{eq:FvdGupp}~\cite{Aud12}. However, instead of the equality~\eqref{eq:mult}, it only satisfies the inequality $\overlap(\rho \otimes \rho',\sigma \otimes \sigma') \geq \overlap(\rho,\sigma) \overlap(\rho',\sigma')$. Looking through the proofs described above for the necessary versus sufficient conditions, this means that only the proof of the latter generalizes directly if we choose $\overlap=Q$. Unfortunately, Theorem~\ref{Theoremsufffid1} with the fidelity simply replaced by the quantum Chernoff bound is a worse result, because the quantities are related by ${\operatorname{F}(\rho, \sigma) \geq Q(\rho, \sigma)}$~\cite{Audenaert_2007} (similar to the previous situation for pretty-good fidelity). Hence an argument that simply follows the proof structure sketched above with $\overlap=Q$ would not yield a better result than the the original Theorem~\ref{Theoremsufffid1} statement based on fidelity --- to get better results using the quantum Chernoff bound, one would need a different proof structure.

\section*{Data Availability} 
The datasets produced in this work are available from the authors upon reasonable request.

\section*{Code Availability} 
The MATLAB code for this paper can be found at the following URL:
\begin{equation*}
\text{\href{https://github.com/Thomas0501/Fidelity-Optimization}{https://github.com/Thomas0501/Fidelity-Optimization}}
\end{equation*}

\section*{Acknowledgements}
We thank Thomas van Himbeeck for very helpful details regarding the SDP reduction in~\cite{himbeeck2019correlations}, and Renato Renner for feedback on this work. 
We also thank Raban Iten, Joseph M.~Renes, and Marco Tomamichel for useful discussions regarding the pretty-good fidelity and quantum Chernoff bound. 

This project was funded by the Swiss National Science Foundation via the National Center for Competence in Research for Quantum Science and Technology (QSIT), the Air Force Office of Scientific Research (AFOSR) via grant FA9550-19-1-0202, and the QuantERA project eDICT.

Part of this work was done while Thomas Hahn worked at the Weizmann Institute of Science under Rotem Arnon-Friedman. During his stay, he was supported by a research grant from the Marshall and Arlene Bennett Family Research Program.

The computations were performed with the NPAHierarchy function in QETLAB~\cite{qetlab}, using the CVX package~\cite{cvxpackage,cvxbook}, {as well the vert2con function created by Michael Kleder}.

\section*{Author Contributions} 
Both authors contributed to the theoretical analysis and algorithm development. T.~H.~implemented the algorithm and generated the corresponding data.

\section*{Competing Interests} 
The authors declare no competing interests.

\clearpage

\bibliography{Latex_Bib}
\bibliographystyle{ieeetr}

\end{document}